
\documentclass{llncs}
\usepackage{version}
\usepackage{pais}
\usepackage{url}
\urlstyle{same}

\pagestyle{plain}
\raggedbottom

\title{Contradiction-Tolerant Process Algebra \\ with Propositional
       Signals}
\author{J.A. Bergstra \and C.A. Middelburg}
\institute{Informatics Institute, Faculty of Science, University of
           Amsterdam, \\
           Science Park~904, 1098~XH Amsterdam, the Netherlands \\
           \email{J.A.Bergstra@uva.nl,C.A.Middelburg@uva.nl}}

\begin{document}
\maketitle

\begin{abstract}
In a previous paper, an ACP-style process algebra was proposed in which 
propositions are used as the visible part of the state of processes and 
as state conditions under which processes may proceed.
This process algebra, called ACPps, is built on classical propositional
logic.
In this paper, we present a version of ACPps built on a paraconsistent
propositional logic which is essentially the same as CLuNs.
There are many systems that would have to deal with self-contradictory 
states if no special measures were taken.
For a number of these systems, it is conceivable that accepting 
self-contradictory states and dealing with them in a way based on a 
paraconsistent logic is an alternative to taking special measures. 
The presented version of ACPps can be suited for the description and 
analysis of systems that deal with self-contradictory states in a 
way based on the above-mentioned paraconsistent logic.
\begin{keywords} 
process algebra, propositional signal, propositional condition, 
paraconsistent logic. 
\end{keywords}%
\begin{classcode}
D.2.1, D.2.4, F.3.1, F.4.1.
\end{classcode}
\end{abstract}

\section{Introduction}
\label{sect-intro}

Algebraic theories of processes such as ACP~\cite{BW90}, 
CCS~\cite{Mil89}, and CSP~\cite{Hoa85}, as well as most algebraic 
theories of processes in the style of these ones, are concerned with 
the behaviour of processes only.
That is, the state of processes is kept invisible.
In~\cite{BB94b}, an ACP-style process algebra, called ACPps, was 
proposed in which processes have their state to some extent visible.
The visible part of the state of a process, called the signal emitted by 
the process, is a proposition of classical propositional logic.
Propositions are not only used as signals emitted by processes, but also
as conditions under which processes may proceed.
The intuition is that the signal emitted by a process is a proposition 
that holds at its start and the condition under which processes may 
proceed is a proposition that must hold at its start.
Thus, by the introduction of signal emitting processes, an answer is 
given to the question what determines whether a condition under which a 
process may proceed is met.

If the signals emitted by two processes are contradictory, then the 
signal emitted by the parallel composition of these processes is 
self-contradictory.
For example, if the signals emitted by the two processes, being 
propositions, are each others negation, then they are contradictory 
and their conjunction, which is the signal emitted by the parallel 
composition of these processes, is self-contradictory.
Intuitively, a process emitting a self-contradictory signal is an 
impossibility.
Therefore, a special process has been introduced in ACPps to deal with 
it.
In practice, there are many systems that would have to deal with  
self-contradictory states if no special prevention measures or special 
detection and resolution measures were taken.
Some typical examples are web-service-oriented applications and 
autonomous robotic agents (see e.g.~\cite{GKNF05a,PI04a,QV14a}).
At least for a number of these systems, it is conceivable that accepting 
self-contradictory states and dealing with them in a way based on a 
suitable paraconsistent logic is an alternative to taking special
measures. 
It may even be the only workable alternative because a system may have 
to cope with inconsistencies occurring on a large scale.

What exactly does it mean to deal with self-contradictory states in a 
way based on a paraconsistent logic?
The systems referred to above are systems whose behaviour is made up of 
discrete steps where, upon each step performed, the way in which the 
behaviour proceeds is conditional on the current state of the system
concerned.
If the propositions by which the visible part of the possible states of 
a system can be characterized are used as conditions, then it can be 
established in accordance with a paraconsistent propositional logic 
whether a condition is met in a state.
This is what is meant by dealing with self-contradictory states in a way 
based on a paraconsistent logic. 
We think that a version of the process algebra ACPps that is built on an 
appropriate paraconsistent propositional logic instead of classical
propositional logic can be suited for the description and analysis of 
systems that deal with self-contradictory states in a way based on a 
paraconsistent logic.
The important point here is that, in such a logic, it is generally not 
possible to deduce an arbitrary formula from two contradictory formulas.

The question remains: what is an appropriate paraconsistent 
propositional logic?
The ones that have been proposed differ in many ways and whether one of 
them is more appropriate than another is fairly difficult to make out.
A paraconsistent propositional logic is a logic that does not have the 
property that every proposition is a logical consequence of every set of 
hypotheses that contains contradictory propositions.
A paraconsistent propositional logic with the property that every 
proposition is a logical consequence of every set of hypotheses that 
contains contradictory propositions but one is far from appropriate.
Such a logic is a minimal paraconsistent logic.
Maximal paraconsistency, i.e.\ a logical consequence relation that 
cannot be extended without loosing paraconsistency, is generally 
considered an important property.
There are various other properties that have been proposed as 
characteristic of reasonable paraconsistent propositional logics, but 
their importance remains to some extent open to question.
 
The properties that have been proposed as characteristic of reasonable 
paraconsistent propositional logics do not include all properties that 
are required of an appropriate one to build a version of ACPps on.
These properties include, among other things, properties needed to 
retain the basic axioms of ACP-style process algebras.
In this paper, we present a version of ACPps built on the paraconsistent
propositional logic for which the name LP$^{\IImpl,\False}$ was coined
in~\cite{Mid11a}.
This logic, which is essentially the same as J3~\cite{DOt85a}, 
CLuNs~\cite{BC04a}, and LFI1~\cite{CCM07a}, has virtually all properties 
that have been proposed as characteristic of reasonable paraconsistent 
propositional logics as well as all properties that are required of an 
appropriate one to build a version of ACPps on.
LP$^{\IImpl,\False}$ can be replaced by any paraconsistent propositional 
logics with the latter properties, but among the paraconsistent 
propositional logics with the former properties, LP$^{\IImpl,\False}$ is 
the only one with the latter properties.

The structure of this paper is as follows.
First, we give a survey of the paraconsistent propositional logic 
LP$^{\IImpl,\False}$ (Section~\ref{sect-LP-iimpl-false}).
Next, we present \ctBPAps, the subtheory of the version of ACPps built 
on LP$^{\IImpl,\False}$ that does not support parallelism and 
communication (Sections~\ref{sect-ctBPAps} and~\ref{sect-sem-ctBPAps}).
After that, we present \ctACPps, the version of ACPps built on 
LP$^{\IImpl,\False}$, as an extension of \ctBPAps\
(Sections~\ref{sect-ctACPps} and~\ref{sect-sem-ctACPps}).
Following this, we introduce a useful additional feature, namely a 
generalization of the state operators from~\cite{BB88} 
(Section~\ref{sect-ctACPps+SO}).
Then, we treat the addition of guarded recursion to \ctACPps\ 
(Section~\ref{sect-ctACPps+REC}).
Finally, we make some concluding remarks (Section~\ref{sect-concl}).

\section{The Paraconsistent Logic LP$^{\IImpl,\False}$}
\label{sect-LP-iimpl-false}

A set of propositions $\Gamma$ is contradictory if there exists a 
proposition $A$ such that both $A$ and $\Not A$ can be deduced from 
$\Gamma$.
A proposition $A$ is called self-contradictory if $\set{A}$ is 
contradictory.
In classical propositional logic, every proposition can be deduced from 
a contradictory set of propositions.
A paraconsistent propositional logic is a propositional logic in which 
not every proposition can be deduced from each contradictory set of 
propositions.

In~\cite{Pri79a}, Priest proposed the paraconsistent propositional logic
LP (Logic of Paradox).
The logic introduced in this section is LP enriched with an implication 
connective for which the standard deduction theorem holds and a falsity 
constant. 
This logic, called LP$^{\IImpl,\False}$, is in fact the propositional 
fragment of CLuNs~\cite{BC04a} without bi-implications.

LP$^{\IImpl,\False}$ has the following logical constants and connectives:
a falsity constant $\False$,
a unary negation connective $\Not$, 
a binary conjunction connective $\CAnd$, 
a binary disjunction connective $\COr$, and
a binary implication connective $\IImpl$.
Truth and bi-implication are defined as abbreviations:
$\True$ stands for $\Not \False$ and
$A \BIImpl B$ stands for $(A \IImpl B) \CAnd (B \IImpl A)$.

A Hilbert-style formulation of LP$^{\IImpl,\False}$ is given in
Table~\ref{proofsystem-LPiimpl}.
\begin{table}[!tb]
\caption{Hilbert-style formulation of LP$^{\IImpl,\False}$}
\label{proofsystem-LPiimpl}
\begin{eqntbl}
\begin{axcol}
\mathbf{Axiom\; Schemas:}
\\
A \IImpl (B \IImpl A)
\\
(A \IImpl (B \IImpl C)) \IImpl ((A \IImpl B) \IImpl (A \IImpl C))
\\
((A \IImpl B) \IImpl A) \IImpl A
\\
\False \IImpl A
\\
(A \CAnd B) \IImpl A
\\
(A \CAnd B) \IImpl B
\\
A \IImpl (B \IImpl (A \CAnd B))
\\
A \IImpl (A \COr B)
\\
B \IImpl (A \COr B)
\\
(A \IImpl C) \IImpl ((B \IImpl C) \IImpl ((A \COr B) \IImpl C))
\end{axcol}
\qquad
\begin{axcol}
{}
\\
\Not \Not A \BIImpl A
\\
\Not (A \IImpl B) \BIImpl A \CAnd \Not B
\\
\Not (A \CAnd B) \BIImpl \Not A \COr \Not B
\\
\Not (A \COr B) \BIImpl \Not A \CAnd \Not B
\\
{}
\\
A \COr \Not A
\\
{}
\\
\mathbf{Rule\; of\; Inference:}
\\
\Infrule{A \quad A \IImpl B}{B}
\end{axcol}
\end{eqntbl}
\end{table}
\sloppy
In this formulation, which is taken from~\cite{Avr91a}, $A$, $B$, and 
$C$ are used as meta-variables ranging over all formulas of
LP$^{\IImpl,\False}$.
The axiom schemas on the left-hand side of
Table~\ref{proofsystem-LPiimpl} and the single inference rule (modus
ponens) constitute a Hilbert-style formulation of the positive fragment
of classical propositional logic.
The first four axiom schemas on the right-hand side of
Table~\ref{proofsystem-LPiimpl} allow for the negation connective to be
moved inward.
The fifth axiom schema on the right-hand side of
Table~\ref{proofsystem-LPiimpl} is the law of the excluded middle.
This axiom schema can be thought of as saying that, for every
proposition, the proposition or its negation is true, while leaving open
the possibility that both are true.
If we add the axiom schema $\Not A \IImpl (A \IImpl B)$, which says that
any proposition follows from a contradiction, to the given Hilbert-style
formulation of LP$^{\IImpl,\False}$, then we get a Hilbert-style 
formulation of classical propositional logic (see e.g.~\cite{Avr91a}).
We write $\pEnt$ for the syntactic logical consequence relation induced 
by the axiom schemas and inference rule of LP$^{\IImpl,\False}$.

The following outline of the semantics of LP$^{\IImpl,\False}$ is based
on~\cite{Avr91a}.
Like in the case of classical propositional logic, meanings are assigned
to the formulas of LP$^{\IImpl,\False}$ by means of valuations.
However, in addition to the two classical truth values $\true$ (true)
and $\false$ (false), a third meaning $\both$ (both true and false) may
be assigned.
A \emph{valuation} for LP$^{\IImpl,\False}$ is a function $\nu$ from the 
set of all formulas of LP$^{\IImpl,\False}$ to the set 
$\{\true,\false,\both\}$ such that for all formulas $A$ and $B$ of 
LP$^{\IImpl,\False}$:
\begin{eqnarray*}
\val{\False}{\nu} & = & \false,
\\
\val{\Not A}{\nu} & = &
 \left \{
 \begin{array}{l@{\;\;}l}
 \true  & \mathrm{if}\; \val{A}{\nu} = \false \\
 \false & \mathrm{if}\; \val{A}{\nu} = \true \\
 \both  & \mathrm{otherwise},
 \end{array}
 \right.
\\
\val{A \CAnd B}{\nu} & = &
 \left \{
 \begin{array}{l@{\;\;}l}
 \true  & \mathrm{if}\; \val{A}{\nu} = \true  \;\mathrm{and}\;
                        \val{B}{\nu} = \true  \\
 \false & \mathrm{if}\; \val{A}{\nu} = \false \;\mathrm{or}\;
                        \val{B}{\nu} = \false \\
 \both  & \mathrm{otherwise},
 \end{array}
 \right.
\\
\val{A \COr B}{\nu} & = &
 \left \{
 \begin{array}{l@{\;\;}l}
 \true  & \mathrm{if}\; \val{A}{\nu} = \true  \;\mathrm{or}\;
                        \val{B}{\nu} = \true  \\
 \false & \mathrm{if}\; \val{A}{\nu} = \false \;\mathrm{and}\;
                        \val{B}{\nu} = \false \\
 \both  & \mathrm{otherwise},
 \end{array}
 \right.
\\
\val{A \IImpl B}{\nu} & = &
 \left \{
 \begin{array}{l@{\;\;}l}
 \true           & \mathrm{if}\; \val{A}{\nu} = \false \\
 \val{B}{\nu} & \mathrm{otherwise}.
 \end{array}
 \right.
\end{eqnarray*}
The classical truth-conditions and falsehood-conditions for the logical
connectives are retained.
Except for implications, a formula is classified as both-true-and-false
exactly when when it cannot be classified as true or false by the
classical truth-conditions and falsehood-conditions.
The definition of a valuation given above shows that the logical 
connectives of LP$^{\IImpl,\False}$ are (three-valued) truth-functional, 
which means that each $n$-ary connective represents a function from 
$\{\true,\false,\both\}^n$ to $\{\true,\false,\both\}$.

For LP$^{\IImpl,\False}$, the semantic logical consequence relation, 
denoted by $\mEnt$, is based on the idea that a valuation $\nu$ 
satisfies a formula $A$ if $\val{A}{\nu} \in \{\true,\both\}$.
It is defined as follows: $\Gamma \mEnt A$ iff for every 
valuation $\nu$, either $\val{A'}{\nu} = \false$ for some 
$A' \in \Gamma$ or $\val{A}{\nu} \in \{\true,\both\}$.
We have that the Hilbert-style formulation of LP$^{\IImpl,\False}$ is 
strongly complete with respect to its semantics, i.e.\ $\Gamma \pEnt A$ 
iff $\Gamma \mEnt A$ (see e.g.~\cite{BC04a}).

For all formulas $A$ of LP$^{\IImpl,\False}$ in which $\False$ does not 
occur, for all formulas $B$ of LP$^{\IImpl,\False}$ in which no 
propositional variable occurs that occurs in $A$, 
$\set{A, \Not A} \npEnt B$ if $\npEnt B$ (see e.g.~\cite{AA15a}).%
\footnote
{We use the notation 
 ${} \pEnt A$ for $\emptyset\hspace*{.02em} \pEnt A$,
 ${} \npEnt A$ for not $\emptyset\hspace*{.02em} \pEnt A$, and
 $\Gamma \npEnt A$ for not $\Gamma \pEnt A$.}
Hence, LP$^{\IImpl,\False}$ is a paraconsistent propositional logic.

For LP$^{\IImpl,\False}$, the logical equivalence relation $\LEqv$ is 
defined as for classical propositional logic: $A \LEqv B$ iff for every 
valuation $\nu$, $\val{A}{\nu} = \val{B}{\nu}$.
Unlike in classical propositional logic, we do not have that $A \LEqv B$
iff ${} \pEnt A \BIImpl B$.

For LP$^{\IImpl,\False}$, the consistency property is defined as to be 
expected: $A$ is consistent iff for every valuation $\nu$, 
$\val{A}{\nu} \neq \both$.

The following are some important properties of LP$^{\IImpl,\False}$:
\begin{list}{}
 {\setlength{\leftmargin}{2.25em} \settowidth{\labelwidth}{(g)}}
\item[(a)]
\emph{containment in classical logic}:
${\pEnt} \subseteq {\clpEnt}$;%
\footnote
{We use the symbol $\clpEnt$ to denote the logical consequence relation 
 of classical propositional logic.}
\item[(b)]
\emph{proper basic connectives}:
for all sets $\Gamma$ of formulas of LP$^{\IImpl,\False}$ and
all formulas $A$, $B$, and $C$ of LP$^{\IImpl,\False}$:
\begin{list}{}
 {\setlength{\leftmargin}{2.5em} \settowidth{\labelwidth}{(c)}}
\item[(b$_1$)]
$\Gamma \union \set{A} \pEnt B$\phantom{${} \CAnd {}$}\phantom{$C$} iff 
$\Gamma \pEnt A \IImpl B$,
\item[(b$_2$)]
$\Gamma \pEnt A \CAnd B$\phantom{,}\phantom{$C$} iff 
$\Gamma \pEnt A$ and $\Gamma \pEnt B$,
\item[(b$_3$)]
$\Gamma \union \set{A \COr B} \pEnt C$ iff 
$\Gamma \union \set{A} \pEnt C$ and $\Gamma \union \set{B} \pEnt C$;
\end{list}
\item[(c)]
\emph{weakly maximal paraconsistency relative to classical logic}:
for all formulas $A$ of LP$^{\IImpl,\False}$ with $\npEnt A$ and 
$\clpEnt A$, for the minimal consequence relation $\extpEnt$ such that
${\pEnt} \subseteq {\extpEnt}$ and $\extpEnt A$, for all formulas $B$ 
of LP$^{\IImpl,\False}$, $\extpEnt B$ iff $\clpEnt B$;
\item[(d)]
\emph{strongly maximal absolute paraconsistency}:
for all logics $\mathcal{L}$ with the same logical constants and 
connectives as LP$^{\IImpl,\False}$ and a consequence relation 
$\extpEnt$ such that ${\pEnt} \subset {\extpEnt}$, $\mathcal{L}$ is not 
paraconsistent;
\item[(e)]
\emph{internalized notion of consistency}: $A$ is consistent iff
${} \pEnt (A \IImpl \False) \COr (\Not A \IImpl \False)$;
\item[(f)]
\emph{internalized notion of logical equivalence}: $A \LEqv B$ iff
${} \pEnt (A \BIImpl B) \CAnd (\Not A \BIImpl \Not B)$;
\item[(g)]
the laws given in Table~\ref{laws-lequiv} hold for the logical 
equivalence relation of LP$^{\IImpl,\False}$.%
\begin{table}[!tb]
\caption{Laws that hold for the logical equivalence relation of
 LP$^{\IImpl,\False}$}
\label{laws-lequiv}
\begin{eqntbl}
\begin{neqncol}
(1)  & A \CAnd \False \LEqv \False \\
(3)  & A \CAnd \True \LEqv A \\
(5)  & A \CAnd A \LEqv A \\
(7)  & A \CAnd B \LEqv B \CAnd A \\
(9)  & (A \CAnd B) \CAnd C \LEqv A \CAnd (B \CAnd C) \\
(11) & A \CAnd (B \COr C) \LEqv (A \CAnd B) \COr (A \CAnd C) \\
(13) & (A \IImpl B) \CAnd (A \IImpl C) \LEqv A \IImpl (B \CAnd C) \\
(15) & (A \COr \Not A) \IImpl B \LEqv B
\end{neqncol}
\qquad
\begin{neqncol}
(2)  & A \COr \True \LEqv \True \\
(4)  & A \COr \False \LEqv A \\
(6)  & A \COr A \LEqv A \\
(8)  & A \COr B \LEqv B \COr A \\
(10) & (A \COr B) \COr C \LEqv A \COr (B \COr C) \\
(12) & A \COr (B \CAnd C) \LEqv (A \COr B) \CAnd (A \COr C) \\
(14) & (A \IImpl C) \CAnd (B \IImpl C) \LEqv (A \COr B) \IImpl C \\
(16) & A \IImpl (B \IImpl C) \LEqv (A \CAnd B) \IImpl C 
\end{neqncol}
\end{eqntbl}
\end{table}
\end{list}
Properties~(a)--(f) have been mentioned relatively often in the 
literature (see e.g.~\cite{AA15a,AAZ11b,AAZ11a,Avr99a,BC04a,CCM07a}).
Properties~(a), (b$_1$), (c), and~(d) make LP$^{\IImpl,\False}$ an ideal 
paraconsistent logic in the sense made precise in~\cite{AAZ11b}.
By property~(e), LP$^{\IImpl,\False}$ is also a logic of formal 
inconsistency in the sense made precise in~\cite{CCM07a}.
Properties~(a)--(c) \linebreak[2] indicate that LP$^{\IImpl,\False}$ 
retains much of classical propositional logic.
Actually, property~(c) can be strengthened to the following property: 
for all formulas $A$ of LP$^{\IImpl,\False}$, $\pEnt A$ iff $\clpEnt A$.

From Theorem~4.42 in~\cite{AA15a}, we know that there are exactly 8192 
different three-valued paraconsistent propositional logics with 
properties~(a) and~(b).
From Theorem~2 in~\cite{AAZ11b}, we know that properties~(c) and~(d) are 
common properties of all three-valued paraconsistent propositional 
logics with properties~(a) and~(b$_1$).
From Fact~103 in~\cite{CCM07a}, we know that property~(f) is a common 
property of all three-valued paraconsistent propositional logics with 
properties~(a), (b) and~(e).
Moreover, it is easy to see that that property~(e) is a common property 
of all three-valued paraconsistent propositional logics with 
properties~(a) and~(b).
Hence, each three-valued paraconsistent propositional logic with 
properties~(a) and~(b) has properties~(c)--(f) as well.

Property~(g) is not a common property of all three-valued paraconsistent 
propositional logics with properties~(a) and~(b).
To our knowledge, properties like property~(g) are not mentioned in the 
literature.
However, like property~(f), property~(g) is essential for the process 
algebra presented in this paper.
Among the 8192 three-valued paraconsistent propositional logics with 
properties~(a)--(e), \linebreak[2] which are considered desirable 
properties, LP$^{\IImpl,\False}$ is one out of four with the essential 
properties~(f) and~(g).
\begin{proposition}[Almost Uniqueness]
\label{proposition-uniqueness}
There are exactly four three-valued paraconsistent propositional logics 
with the logical constants and connectives of LP$^{\IImpl,\False}$ that 
have the properties~(a)--(g) mentioned above.
\end{proposition}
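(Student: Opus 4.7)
The plan is to combine the enumeration result of Theorem~4.42 in~\cite{AA15a} with a direct syntactic analysis of which of the 8192 logics it produces also validate the laws in Table~\ref{laws-lequiv}.

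First, I would invoke Theorem~4.42 of~\cite{AA15a} to obtain the list of 8192 three-valued paraconsistent propositional logics over the connectives of LP$^{\IImpl,\False}$ that satisfy properties~(a) and~(b). By the observations made just before the proposition, any such logic automatically satisfies properties~(c), (d), (e), and~(f) as well. Consequently, the problem reduces to counting, within this finite list of 8192 logics, those that additionally satisfy property~(g), i.e.\ the laws in Table~\ref{laws-lequiv}.

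Second, I would fix a convenient parametrization of the 8192 candidates. Because each of $\Not$, $\CAnd$, $\COr$, and $\IImpl$ is truth-functional and must agree with the classical truth tables on inputs from $\{\true,\false\}$ (property~(a)), such a logic is fully determined by the values its connectives take on the sixteen tuples that involve $\both$. Truth-functionality also means that $A \LEqv B$ can be checked pointwise on assignments from the variables of $A$ and $B$ to $\{\true,\false,\both\}$, so each law in Table~\ref{laws-lequiv} translates into a finite collection of equations on these free truth-table entries; since at most three variables occur in any single law, the size of each check is bounded.

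Third, I would carry out the case analysis in a convenient order: the absorption laws~(1)--(4) and the idempotence laws~(5), (6) quickly cut down the possibilities for $\CAnd$ and $\COr$ on tuples containing $\both$; the distributivity laws~(11), (12) and the implication-shifting laws~(13), (14), (16) then constrain the truth table of $\IImpl$; and law~(15), together with the already-determined entries, fixes the remaining degrees of freedom, including the value of $\Not\both$, up to exactly four surviving combinations. Verifying that each of these four combinations does yield an admissible three-valued paraconsistent logic completes the argument.

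The main obstacle is the combinatorial verification in the last step. Although any individual law is easy to evaluate on a candidate logic, the interaction between the sixteen free truth-table entries and the sixteen laws makes it very easy to miscount. A systematic bookkeeping device---tabulating, for each free entry, the constraints forced by each law, and only then intersecting---keeps the argument manageable but turns what is conceptually simple into a genuinely lengthy case analysis.
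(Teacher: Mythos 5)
Your proposal is correct and takes essentially the same route as the paper: reduce to the 8192 logics of Theorem~4.42 of~\cite{AA15a}, observe that properties~(c)--(f) come for free, and then determine by a finite case analysis on the truth-table entries involving $\both$ (equivalently, on the ordinary truth tables represented by the non-deterministic truth tables of~\cite{AAZ11b}) which combinations satisfy the laws of Table~\ref{laws-lequiv}. Like the paper, you describe rather than carry out the decisive combinatorial check, differing only in which laws you group with which connective.
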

\begin{proof}
Because property~(f) is a common property of all 8192 three-valued
paraconsistent propositional logics with properties~(a)--(e), it is 
sufficient to prove that, among these 8192 logics, there exists only 
one that has property~(g).
Because `non-deterministic truth tables' that uniquely characterize the 
8192 logics are given in~\cite{AAZ11b}, the theorem can be proved by 
showing that, for each of the connectives, only one of the ordinary 
truth tables represented by the non-deterministic truth table for that 
connective is compatible with the laws given in Table~\ref{laws-lequiv}.
It can be shown by short routine case analyses that only one of the 8
ordinary truth tables represented by the non-deterministic truth tables 
for conjunction is compatible with laws~(1), (3), (5), and~(7) and only 
one of the 32 ordinary truth tables represented by the non-deterministic 
truth tables for disjunction is compatible with laws~(2), (4), (6), 
and~(8).
The truth tables concerned are compatible with laws~(9)--(12) as well. 
Given the ordinary truth table for conjunction and disjunction so 
obtained, it can be shown by slightly longer routine case analyses that 
exactly four of the 16 ordinary truth tables represented by the 
non-deterministic truth table for implication are compatible with 
laws~(13)--(15).
The four truth tables concerned are compatible with law~(16) as well.
\qed
\end{proof}
The next corollary follows from the proof of 
Proposition~\ref{proposition-uniqueness}.
\begin{corollary}[Uniqueness]
\label{corollary-uniqueness}
LP$^{\IImpl,\False}$ is the only three-valued paraconsistent 
pro\-positional logic with the logical constants and connectives of 
LP$^{\IImpl,\False}$ that has the properties~(a)--(g) mentioned above 
and moreover the property that the law $\Not \Not A \LEqv A$ holds for 
its logical equivalence relation.
\end{corollary}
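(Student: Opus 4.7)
The plan is to push the case analysis in the proof of Proposition~\ref{proposition-uniqueness} one step further. That proof isolates exactly four three-valued paraconsistent logics with the logical constants and connectives of LP$^{\IImpl,\False}$ satisfying (a)--(g); they share the unique truth tables for $\CAnd$ and $\COr$ picked out by laws~(1)--(12), while the freedom left by laws~(13)--(16) splits into four surviving combinations of truth tables for $\IImpl$ and $\Not$. My task is to show that the additional law $\Not\Not A \LEqv A$ selects exactly one of these four, and that it is LP$^{\IImpl,\False}$.

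The central observation is that $\LEqv$ is defined valuation-wise: $A \LEqv B$ iff $\val{A}{\nu} = \val{B}{\nu}$ for every valuation $\nu$. Hence $\Not\Not A \LEqv A$ is equivalent to the statement that the truth-function induced by $\Not$ on $\{\true,\false,\both\}$ is an involution. Property~(a) already forces $\Not\true = \false$ and $\Not\false = \true$, so the only remaining freedom lies in the value of $\Not\both$. Involutivity then rules out $\Not\both = \true$ (since then $\Not\Not\both = \false$) and $\Not\both = \false$ (since then $\Not\Not\both = \true$), forcing $\Not\both = \both$. That is precisely the negation truth table of LP$^{\IImpl,\False}$.

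The remaining step is to go through the four logics identified in the proof of Proposition~\ref{proposition-uniqueness} and check which of them have $\Not\both = \both$. The main obstacle is that the proof of the Proposition spells out the case analysis for $\CAnd$, $\COr$ and $\IImpl$ but leaves the treatment of $\Not$ implicit in the reference to~\cite{AAZ11b}. To close the argument I would revisit the non-deterministic truth table for negation given there, enumerate its ordinary refinements, and verify that precisely one of the four surviving $(\Not,\IImpl)$-combinations has the involutive negation. By construction, the logic picked out is LP$^{\IImpl,\False}$, which yields the claimed uniqueness.
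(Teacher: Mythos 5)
Your argument is correct and is essentially the paper's own: the paper gives no separate proof of the corollary beyond the remark that it follows from the proof of Proposition~1, and your reduction of the law $\Not \Not A \LEqv A$ to involutivity of the induced truth-function for $\Not$ --- hence, given classicality on $\{\true,\false\}$, to $\Not\both = \both$ --- followed by checking that exactly one of the four surviving logics has this negation (which must then be LP$^{\IImpl,\False}$, since LP$^{\IImpl,\False}$ is itself among the survivors) is precisely the intended way of extracting the corollary from that proof. The enumeration you defer to the non-deterministic truth tables of the cited reference is left equally implicit in the paper, so your proposal matches the paper's level of detail as well as its route.
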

Corollary~\ref{corollary-uniqueness} may be of independent importance to 
the area of paraconsistent logics.

From now on, we will use the following abbreviations: $A \Iff B$ stands 
for $(A \BIImpl B) \CAnd (\Not A \BIImpl \Not B)$ and $\Cons A$ stands 
for $(A \IImpl \False) \COr (\Not A \IImpl \False)$.

In Section~\ref{sect-ctBPAps}, where we will use formulas of 
LP$^{\IImpl,\False}$ as terms, equality of formulas will be interpreted 
as logical equivalence.
This means that equality of formulas can be formally proved using the 
fact that $A \LEqv B$ iff $\pEnt A \Iff B$.
This fact also suggests that LP$^{\IImpl,\False}$ may be Blok-Pigozzi 
algebraizable~\cite{BP89a}.
It is shown in~\cite{CCM07a} that actually all 8192 three-valued 
paraconsistent propositional logics referred to above are Blok-Pigozzi 
algebraizable.
Although there must exist one, a conditional-equational axiomatization 
of the algebras concerned in the case of LP$^{\IImpl,\False}$ has not 
yet been devised.
Owing to this, the equations derivable in the version of ACPps built on 
LP$^{\IImpl,\False}$ presented in this paper cannot always be derived by 
equational reasoning only. 

\section{Contradiction-Tolerant BPA with Propositional Signals}
\label{sect-ctBPAps}

BPAps is a subtheory of ACPps that does not support parallelism and
communication.
In this section, we present the contradiction-tolerant version of BPAps.
In this version, which is called \ctBPAps, processes have their state to 
some extent visible.
The visible part of the state of a process, called the signal emitted by 
the process, is a proposition of LP$^{\IImpl,\False}$.
These propositions are not only used as signals emitted by processes, 
but also as conditions under which processes may proceed.
The intuition is that the signal emitted by a process is a proposition 
that holds at its start and the condition under which processes may 
proceed is a proposition that must holds at its start.

In \ctBPAps, just as in BPAps, it is assumed that a fixed but arbitrary 
finite set $\Act$ of \emph{actions}, with $\dead \not\in \Act$, and a 
fixed but arbitrary finite set $\AProp$ of \emph{atomic propositions} 
have been given.
We write $\Actd$ for $\Act \union \set{\dead}$.

The algebraic theory \ctBPAps\ has two sorts:
\begin{itemize}
\item
the sort $\Proc$ of \emph{processes};
\item
the sort $\Prop$ of \emph{propositions}.
\end{itemize}
\pagebreak[2]
The algebraic theory \ctBPAps\ has the following constants and 
operators to build terms of sort $\Prop$:
\begin{itemize}
\item
for each $P \in \AProp$, the \emph{atomic proposition} constant 
$\const{P}{\Prop}$;
\item
the \emph{falsity} constant $\const{\False}{\Prop}$;
\item
the unary \emph{negation} operator $\funct{\Not}{\Prop}{\Prop}$;
\item
the binary \emph{conjunction} operator 
$\funct{\CAnd}{\Prop \x \Prop}{\Prop}$;
\item
the binary \emph{disjunction} operator 
$\funct{\COr}{\Prop \x \Prop}{\Prop}$;
\item
the binary \emph{implication} operator 
$\funct{\IImpl}{\Prop \x \Prop}{\Prop}$.
\end{itemize}
The algebraic theory \ctBPAps\ has the following constants and 
operators to  build terms of sort $\Proc$:
\begin{itemize}
\item
the \emph{deadlock} constant $\const{\dead}{\Proc}$;
\item
for each $a \in \Act$, the \emph{action} constant $\const{a}{\Proc}$;
\item
the \emph{inaccessible process} constant $\const{\nex}{\Proc}$;
\item
the binary \emph{alternative composition} operator 
$\funct{\altc}{\Proc \x \Proc}{\Proc}$;
\item
the binary \emph{sequential composition} operator 
$\funct{\seqc}{\Proc \x \Proc}{\Proc}$;
\item
the binary \emph{guarded command} operator 
$\funct{\gc}{\Prop \x \Proc}{\Proc}$;
\item
the binary \emph{signal emission} operator 
$\funct{\emi}{\Prop \x \Proc}{\Proc}$.
\end{itemize}
It is assumed that there are infinitely many variables of sort $\Proc$, 
including $x$, $y$, and $z$. 

We use infix notation for the binary operators.
The following precedence conventions are used to reduce the need for
parentheses.
The operators to build terms of sort $\Prop$ bind stronger than the
operators to build terms of sort $\Proc$.
The operator ${} \seqc {}$ binds stronger than all other binary 
operators to build terms of sort $\Proc$ and the operator ${} \altc {}$
binds weaker than all other binary operators to build terms of sort 
$\Proc$.

Let $p$ and $q$ be closed terms of sort $\Proc$ and $\phi$ be a 
closed term of sort $\Prop$.
Intuitively, the constants and operators to build terms of sort $\Proc$
can be explained as follows:
\begin{itemize}
\item
$\dead$ is not capable of doing anything, the proposition that holds at 
the start of $\dead$ is $\True$;
\item
$a$ is only capable of performing action $a$ unconditionally and next 
terminating successfully, the proposition that holds at the start of $a$ 
is $\True$;
\item
$\nex$ is not capable of doing anything; there is an inconsistency at 
the start of~$\nex$;
\item
$p \altc q$ behaves either as $p$ or as $q$ but not both, the 
proposition that holds at the start of $p \altc q$ is the conjunction of 
the propositions that hold at the start of $p$ and $q$;
\item
$p \seqc q$ first behaves as $p$ and on successful termination of $p$ 
it next behaves as $q$, the proposition that holds at the start of 
$p \seqc q$ is the proposition that holds at the start of $p$;
\item
$\phi \gc p$ behaves as $p$ under condition $\phi$, the proposition 
that holds at the start of $\phi \gc p$ is the implication with $\phi$ 
as antecedent and the proposition that holds at the start of $p$ as 
consequent;
\item
$\phi \emi p$ behaves as $p$ if the proposition that holds at its start 
does not equal $\False$ and as $\nex$ otherwise, in the former case, the 
proposition that holds at the start of $\phi \emi p$ is the conjunction 
of $\phi$ and the proposition that holds at the start of $p$.
\end{itemize}

The axioms of \ctBPAps\ are the axioms given in 
Table~\ref{axioms-ctBPAps}.%
\begin{table}[!tb]
\caption{Axioms of \ctBPAps}
\label{axioms-ctBPAps}
\begin{eqntbl}
\begin{axcol}
x \altc y = y \altc x                                 & \ax{A1}\\
(x \altc y) \altc z = x \altc (y \altc z)             & \ax{A2}\\
x \altc x = x                                         & \ax{A3}\\
(x \altc y) \seqc z = x \seqc z \altc y \seqc z       & \ax{A4}\\
(x \seqc y) \seqc z = x \seqc (y \seqc z)             & \ax{A5}\\
x \altc \dead = x                                     & \ax{A6}\\
\dead \seqc x = \dead                                 & \ax{A7}\\
{}\\
\True \gc x = x                                       & \ax{GC1}\\
\False \gc x = \dead                                  & \ax{GC2}\\
\phi \gc \dead = \dead                                & \ax{GC3}\\
\phi \gc (x \altc y) = \phi \gc x \altc \phi \gc y    & \ax{GC4}\\
\phi \gc x \seqc y = (\phi \gc x) \seqc y             & \ax{GC5}\\
\phi \gc (\psi \gc x) = (\phi \CAnd \psi) \gc x        & \ax{GC6}\\
(\phi \COr \psi) \gc x = \phi \gc x \altc \psi \gc x   & \ax{GC7}
\end{axcol}
\qquad
\begin{axcol}
x \altc \nex = \nex                                   & \ax{NE1}\\
\nex \seqc x = \nex                                   & \ax{NE2}\\
a \seqc \nex = \dead                                  & \ax{NE3}\\
{}\\{}\\{} \\
\phi = \psi \hfill 
\mif \pEnt \phi \Iff \psi & \ax{IMP}\\
{}\\
\True \emi x = x                                      & \ax{SE1}\\
\False \emi x = \nex                                  & \ax{SE2}\\
\phi \emi \nex = \nex                                 & \ax{SE3}\\
\phi \emi x \altc y = \phi \emi (x \altc y)           & \ax{SE4}\\
(\phi \emi x) \seqc y = \phi \emi x \seqc y           & \ax{SE5}\\
\phi \emi (\psi \emi x ) = (\phi \CAnd \psi) \emi x    & \ax{SE6}\\
\phi \emi (\phi \gc x) = \phi \emi  x                 & \ax{SE7}\\
\phi \gc (\psi \emi x) = 
(\phi \IImpl \psi) \emi (\phi \gc x)                  & \ax{SE8}
\end{axcol}
\end{eqntbl}
\end{table}
In this table, $a$ stands for an arbitrary constant from 
$\Act \union \set{\dead}$, $\phi$ and $\psi$ stand for arbitrary closed 
terms of sort $\Prop$, and ${} \pEnt {}$ is 
the logical consequence relation of LP$^{\IImpl,\False}$.
A1--A7 are the axioms of \BPAd, the subtheory of ACP that does not 
support parallelism and communication (see e.g.~\cite{BW90}).
NE1--NE3, GC1--GC7, and SE1--SE8 have been taken from~\cite{BB94b}, 
using a different numbering.%
\footnote
{The axioms of \ctBPAps\ are not independent:
 A3, A6, and A7 are derivable from GC1--GC7 and IMP,
 NE1 and NE2 are derivable from SE1--SE8, and
 SE3 is derivable from SE6 and IMP.}
By IMP, the axioms of \ctBPAps\ include all equations $\phi = \psi$ for 
which $\phi \Iff \psi$ is a theorem of LP$^{\IImpl,\False}$.
This is harmless because the connective $\Iff$, which is the 
internalization of the logical equivalence relation $\LEqv$ of 
LP$^{\IImpl,\False}$, is a congruence.

The following generalizations of axioms SE4 and SE7 are among the 
equations derivable from the axioms of \ctBPAps:
\begin{ldispl}
\begin{geqns}
\phi \emi x \altc \psi \emi y = 
  (\phi \CAnd \psi) \emi (x \altc y)\;,
\\
(\phi \CAnd \psi) \emi (\phi \gc x) = (\phi \CAnd \psi) \emi x\;,
\\
\phi \emi ((\phi \CAnd \psi) \gc x) = \phi \emi (\psi \gc x)\;;
\end{geqns}
\end{ldispl}%
the following specialization of axiom SE4 is among the equations 
derivable from the axioms of \ctBPAps:
\begin{ldispl}
\begin{geqns}
\phi \emi \dead \altc x = \phi \emi x\;;
\end{geqns}
\end{ldispl}%
and the following equations concerning the inaccessible process are 
among the equations derivable from the axioms of \ctBPAps: 
\begin{ldispl}
\begin{geqns}
\phi \emi \nex = \nex\;,
\\
\phi \gc \nex = (\phi \IImpl \False) \emi \dead\;.
\end{geqns}
\end{ldispl}%
The derivable equations mentioned above are derivable from the axioms of 
BPAps as well.
The equation $\phi \gc \nex = \Not \phi \emi \dead$, which is derivable 
from the axioms of BPAps, is not derivable from the axioms of \ctBPAps.

Let $\phi$ be a closed term of sort $\Prop$ such that
not $\pEnt \phi \,\Iff\, \False$ and
not $\pEnt \Not \phi \,\Iff\, \False$. 
Then, because not
$\pEnt \phi \CAnd \Not \phi \,\Iff\, \False$, 
we have that
$a \seqc (\phi \emi x \altc \Not \phi \emi y) = 
 a \seqc (\False \emi (x \altc y)) =  \dead$,
which is derivable from the axioms of BPAps, is not derivable from the 
axioms of \ctBPAps. 
This shows the main difference between \ctBPAps\ and BPAps: the 
alternative composition of two processes of which the propositions that 
hold at the start of them are contradictory does not lead to an 
inconsistency in \ctBPAps, whereas it does lead to an inconsistency in 
BPAps.
This is why \ctBPAps\ is called the contradiction-tolerant version of
BPAps.

Let $\phi$ be a closed term of sort $\Prop$ such that
not $\pEnt \phi \,\Iff\, \False$ and
not $\pEnt \Not \phi \,\Iff\, \False$. 
We can derive 
$a \seqc (\phi \emi b \altc \Not \phi \emi c) =
 a \seqc ((\phi \CAnd \Not \phi) \emi (b \altc c)) = \dead$ 
from the axioms of BPAps because, in the case of BPAps, 
$a \seqc (\phi \emi b \altc \Not \phi \emi c)$ 
is not capable of doing anything.
We can only derive
$a \seqc (\phi \emi b \altc \Not \phi \emi c) =
 a \seqc ((\phi \CAnd \Not \phi) \emi (b \altc c))$ 
from the axioms of \ctBPAps\ because, in the case of \ctBPAps, 
$a \seqc (\phi \emi b \altc \Not \phi \emi c)$ 
is capable of first performing $a$ and next either performing $b$ and 
after that terminating successfully or performing $c$ and after that 
terminating successfully --- although the proposition that holds at 
the start of the process that remains after performing $a$ is the 
contradiction $\phi \CAnd \Not \phi$.

Let $\phi$ be a closed term of sort $\Prop$ such that
not $\pEnt \phi \,\Iff\, \False$ and
not $\pEnt \Not \phi \,\Iff\, \False$. 
Then, because
$\pEnt \Cons \phi \CAnd \phi \CAnd \Not \phi \,\Iff\, \False$,  
we have that 
$a \seqc (\Cons \phi \emi (\phi \emi x \altc \Not \phi \emi y)) =
 a \seqc (\False \emi (x \altc y)) = \dead$
is derivable from the axioms of \ctBPAps. 
This shows that it can be enforced by means of a consistency proposition
($\Cons \phi$) that the alternative composition of two processes of 
which the propositions that hold at the start of them are contradictory 
leads to an inconsistency in \ctBPAps.

Hereafter, we will write $[\phi]$ for the equivalence class of $\phi$ 
modulo $\LEqv$.
That is, $[\phi] = \set{\psi \where \phi \LEqv \psi}$.
Hence, 
$[\phi] =
 \set{\psi \where \;\pEnt \phi \Iff \psi}$.

All processes that can be described by a closed term of \ctBPAps, can be
described by a basic term.
The set $\cB$ of \emph{basic terms} is inductively defined by the 
following rules:
\begin{itemize}
\item
$\nex \in \cB$;
\item
if $\phi \notin [\False]$, then $\phi \emi \dead \in \cB$;
\item
if $\phi \notin [\False]$ and $a \in \Act$, then $\phi \gc a \in \cB$;
\item
if $\phi \notin [\False]$, $a \in \Act$, and $p \in \cB$, then  
$\phi \gc a \seqc p \in \cB$;
\item
if $p,q \in \cB$, then $p \altc q \in \cB$.
\end{itemize}
\pagebreak[2]
Each basic term can be written as $\nex$ or in the form
\begin{ldispl}
\chi \emi \dead \altc
\Altc{i \in \set{1,\ldots,n}} \phi_i \gc a_i \seqc p_i \altc
\Altc{j \in \set{1,\ldots,m}} \psi_j \gc b_j\;,
\end{ldispl}%
where $n,m \in \Nat$, 
where $\chi \notin [\False]$,
where $\phi_i \notin [\False]$, $a_i \in \Act$, and $p_i \in \cB$ for 
all $i \in \set{1,\ldots,n}$, and
where $\psi_j \notin [\False]$ and $b_j \in \Act$ for 
all $j \in \set{1,\ldots,m}$.
The subterm $\chi$ is called the \emph{root signal} of the basic
term and the subterms $\phi_i \gc a_i \seqc p_i$ and $\psi_j \gc b_j$
are called the \emph{summands} of the basic term.

All closed \ctBPAps\ terms of sort $\Proc$ can be reduced to a basic 
term.
\begin{proposition}[Elimination]
\label{proposition-elim-ctBPAps}
For all closed \ctBPAps\ terms $p$ of sort $\Proc$, there exists a 
$q \in \cB$ such that $p = q$ is derivable from the axioms of \ctBPAps.
\end{proposition}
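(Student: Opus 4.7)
The proof proceeds by structural induction on the closed $\Proc$-term $p$. The three base cases are routine: $\dead = \True \emi \dead \in \cB$ via SE1 (since $\True \notin [\False]$), each action constant $a$ reduces to $\True \gc a \in \cB$ via GC1, and $\nex$ is already in $\cB$. For the compound cases I first apply the primary induction hypothesis to obtain basic forms $b_1, b_2 \in \cB$ of the process subterms. The case $p_1 \altc p_2$ is then immediate from closure of $\cB$ under $\altc$, while the remaining three cases each require a secondary structural induction on $b_1$.

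For $p_1 \seqc p_2$, the secondary induction on $b_1$ dispatches as follows: NE2 for $b_1 = \nex$; SE5 together with A7 for $b_1 = \chi \emi \dead$; GC5 for $b_1 = \phi \gc a$ (the result $\phi \gc a \seqc b_2$ is directly a $\cB$-constructor); GC5 together with A5 and the secondary IH on $r$ for $b_1 = \phi \gc a \seqc r$; and A4 plus secondary IH on both summands for $b_1 = b_1' \altc b_1''$. For $\phi \gc p_1$ and $\phi \emi p_1$, I first dispose of the case $\phi \in [\False]$ via IMP (which delivers $\phi = \False$) together with GC2 respectively SE2, obtaining $\dead$ respectively $\nex$. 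When $\phi \notin [\False]$, the secondary induction distributes the operator over $\altc$-summands of $b_1$ via GC4 respectively SE4 and then reduces each summand. For $\gc$: SE8 combined with GC3 converts a root-signal summand $\chi \emi \dead$ into $(\phi \IImpl \chi) \emi \dead$; GC6 (preceded by GC5 when a tail is present) combines a guarded summand $\psi \gc a$ or $\psi \gc a \seqc r$ into $(\phi \CAnd \psi) \gc a$ or $(\phi \CAnd \psi) \gc a \seqc r$; and the base case $b_1 = \nex$ is handled by the derived equation $\phi \gc \nex = (\phi \IImpl \False) \emi \dead$. For $\emi$: SE6 combines $\phi \emi (\chi \emi \dead)$ into $(\phi \CAnd \chi) \emi \dead$; the base case $b_1 = \nex$ is axiom SE3; and for a guarded summand $\psi \gc a$ or $\psi \gc a \seqc r$ one first inserts a trivial $\True \emi \dead = \dead$ alongside it via A6 and SE1, and then uses A1, SE4, and SE6 to install $\phi \emi \dead$ as an extra root-signal summand.

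The main obstacle is not any individual equational step but the pervasive bookkeeping of side conditions: after every combination $\phi \CAnd \psi$, $\phi \CAnd \chi$, $\phi \IImpl \psi$, or $\phi \IImpl \False$ one must test whether the result lies in $[\False]$ and branch accordingly. In the positive branch the new summand is a legitimate $\cB$-constructor and the induction proceeds; in the negative branch the summand collapses via IMP to $\False \gc \ldots = \dead$ or $\False \emi \ldots = \nex$, which must then be absorbed by A6 or NE1 into the surrounding sum to restore basic form. Once this branching discipline is organized, every step of the secondary inductions is a short chain of axiom applications, and termination is immediate because each secondary IH is invoked on a strictly smaller basic term.
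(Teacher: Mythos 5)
Your proof is correct and follows essentially the same route as the paper's: a primary structural induction on the closed term, with the $\seqc$ and $\gc$ cases discharged by auxiliary inductions on the structure of a basic term --- exactly the two claims the paper isolates. The only divergence is that you also run a secondary induction for the $\emi$ case and make the $[\False]$ side-condition bookkeeping explicit, whereas the paper treats the $\emi$ case as trivial (SE4 and SE6 let the emitted signal be absorbed into a single root-signal summand without recursion); this is an organizational difference, not a different argument.
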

\begin{proof}
The proof is straightforward by induction on the structure of closed 
term $p$.
If $p$ is of the form $\nex$, $a$, $p' \altc p''$ or $\phi \emi p'$,
then it is trivial to show that there exists a $q \in \cB$ such that 
$p = q$ is derivable from the axioms of \ctBPAps.
If $p$ is of the form $p' \seqc p''$ or $\phi \gc p'$, then it follows
immediately from the induction hypothesis and the following claims:
\begin{itemize}
\item
for all $p,p' \in \cB$, there exists a $p'' \in \cB$ such that 
$p \seqc p' = p''$ is derivable from the axioms of \ctBPAps;
\item
for all $\phi \notin [\False]$ and $p \in \cB$, there exists a 
$p' \in \cB$ such that $\phi \gc p = p'$ is derivable from the axioms of 
\ctBPAps.
\end{itemize}
Both claims are easily proved by induction on the structure of basic
term $p$.
\qed
\end{proof}

\section{Semantics of \boldmath{$\ctBPAps$}}
\label{sect-sem-ctBPAps}

In this section, we present a structural operational semantics of 
\ctBPAps, define a notion of bisimulation equivalence based on this 
semantics, and show that the axioms of \ctBPAps\ are sound and complete 
with respect to this bisimulation equivalence.

We start with the presentation of the structural operational semantics 
of \ctBPAps.
The following transition relations on closed terms of sort $\Proc$ are 
used:
\begin{itemize}
\item 
for each $\ell \in C \x \Act$,
a binary \emph{action step} relation ${} \step{\ell} {}$;
\item 
for each $\ell \in C \x \Act$,
a unary \emph{action termination} relation ${} \term{\ell}$;
\item
for each $\phi \in C$, a unary \emph{signal emission} relation 
$\sgn{\phi}$; \phantom{${} \step{\ell} {}$}
\end{itemize}
where $C$ is the set of all closed terms $\phi$ of sort $\Prop$ such 
that $\phi \notin [\False]$.
We write $\astep{p}{\gact{\phi}{a}}{q}$ instead of 
$\tup{p,q} \in {\step{\tup{\phi,a}}}$, 
$\aterm{p}{\gact{\phi}{a}}$ instead of 
$p \in {\term{\tup{\phi,a}}}$, and
$\rsgn{p} = \phi$ instead of $p \in \sgn{\phi}$.
These relations can be explained as follows:
\begin{itemize}
\item
$\aterm{p}{\gact{\phi}{a}}$: 
$p$ is capable of performing action $a$ under condition $\phi$ and 
then terminating successfully;
\item
$\astep{p}{\gact{\phi}{a}}{q}$: 
$p$ is capable of performing action $a$ under condition $\phi$ and 
then proceeding as $q$;
\item
$\rsgn{p} = \phi$: 
the proposition that holds at the start of $p$ is $\phi$.
\end{itemize}
The structural operational semantics of \ctBPAps\ is described by the 
transition rules given in Table~\ref{sos-ctBPAps}.%
\begin{table}[!tb]
\caption{Transition rules for \ctBPAps}
\label{sos-ctBPAps}
\begin{druletbl}
\Rule
{\phantom{\aterm{a}{\gact{\phi}{a}}}}
{\aterm{a}{\gact{\True}{a}}}
\\
\RuleC
{\aterm{x}{\gact{\phi}{a}},\; \rsgn{x \altc y} = \psi}
{\aterm{x \altc y}{\gact{\phi}{a}}}
{\psi \notin [\False]}
& 
\RuleC
{\aterm{y}{\gact{\phi}{a}},\; \rsgn{x \altc y} = \psi}
{\aterm{x \altc y}{\gact{\phi}{a}}}
{\psi \notin [\False]}
\\
\RuleC
{\astep{x}{\gact{\phi}{a}}{x'},\; \rsgn{x \altc y}  = \psi}
{\astep{x \altc y}{\gact{\phi}{a}}{x'}}
{\psi \notin [\False]}
& 
\RuleC
{\astep{y}{\gact{\phi}{a}}{y'},\; \rsgn{x \altc y} = \psi}
{\astep{x \altc y}{\gact{\phi}{a}}{y'}}
{\psi \notin [\False]}
\\
\RuleC
{\aterm{x}{\gact{\phi}{a}},\; \rsgn{y} = \psi}
{\astep{x \seqc y}{\gact{\phi}{a}}{y}}
{\psi \notin [\False]}
& 
\Rule
{\astep{x}{\gact{\phi}{a}}{x'}}
{\astep{x \seqc y}{\gact{\phi}{a}}{x' \seqc y}}
\\
\RuleC
{\aterm{x}{\gact{\phi}{a}}}
{\aterm{\psi \gc x}{\gact{\phi \CAnd \psi}{a}}}
{\phi \CAnd \psi \notin [\False]}
& 
\RuleC
{\astep{x}{\gact{\phi}{a}}{x'}}
{\astep{\psi \gc x}{\gact{\phi \CAnd \psi}{a}}{x'}}
{\phi \CAnd \psi \notin [\False]}
\\
\RuleC
{\aterm{x}{\gact{\phi}{a}},\; \rsgn{\psi \emi x} = \chi}
{\aterm{\psi \emi x}{\gact{\phi}{a}}}
{\chi \notin [\False]}
& 
\RuleC
{\astep{x}{\gact{\phi}{a}}{x'},\; \rsgn{\psi \emi x} = \chi}
{\astep{\psi \emi x}{\gact{\phi}{a}}{x'}}
{\chi \notin [\False]}
\eqnsep
\Rule
{\phantom{\rsgn{\nex} = \False}}
{\rsgn{\nex} = \False}
\qquad
\Rule
{\phantom{\rsgn{a} = \True}}
{\rsgn{a} = \True}
\\
\Rule
{\rsgn{x} = \phi,\; \rsgn{y} = \psi}
{\rsgn{x \altc y} = \phi \CAnd \psi}
\qquad
\Rule
{\rsgn{x} = \phi}
{\rsgn{x \seqc y} = \phi}
& 
\Rule
{\rsgn{x} = \phi}
{\rsgn{\psi \gc y} = \psi \IImpl \phi}
\qquad
\Rule
{\rsgn{x} = \phi}
{\rsgn{\psi \emi y} = \psi \CAnd \phi}
\end{druletbl}
\end{table}
In this table, $a$ stands for an arbitrary constant from 
$\Act \union \set{\dead}$ and $\phi$, $\psi$, and $\chi$ stand for 
arbitrary closed terms of sort $\Prop$.

A \emph{bisimulation} is a binary relation $R$ on closed \ctBPAps\ terms 
of sort $\Proc$ such that, for all closed \ctBPAps\ terms $p,q$ of sort 
$\Proc$ with $(p,q) \in R$, the following conditions hold:
\begin{itemize}
\item
if $\astep{p}{\gact{\phi}{a}}{p'}$, then, for all valuations $\nu$ with 
$\nu(\rsgn{p}) \neq \false$ and $\nu(\phi) \neq \false$, there exists a
closed term $\psi$ of sort $\Prop$ and a closed term $q'$ of sort 
$\Proc$ such that $\nu(\phi) = \nu(\psi)$, 
$\astep{q}{\gact{\psi}{a}}{q'}$, and $(p',q') \in R$;
\item
if $\astep{q}{\gact{\psi}{a}}{q'}$, then, for all valuations $\nu$ with 
$\nu(\rsgn{q}) \neq \false$ and $\nu(\psi) \neq \false$, there exists a
closed term $\phi$ of sort $\Prop$ and a closed term $p'$ of sort 
$\Proc$ such that $\nu(\psi) = \nu(\phi)$, 
$\astep{p}{\gact{\phi}{a}}{p'}$, and $(p',q') \in R$;
\item
if $\aterm{p}{\gact{\phi}{a}}$, then, for all valuations $\nu$ with 
$\nu(\rsgn{p}) \neq \false$ and $\nu(\phi) \neq \false$, there exists a
closed term $\psi$ of sort $\Prop$ such that $\nu(\phi) = \nu(\psi)$ and 
$\aterm{q}{\gact{\psi}{a}}$;
\item
if $\aterm{q}{\gact{\psi}{a}}$, then, for all valuations $\nu$ with 
$\nu(\rsgn{q}) \neq \false$ and $\nu(\psi) \neq \false$, there exists a
closed term $\phi$ of sort $\Prop$ such that $\nu(\psi) = \nu(\phi)$ and 
$\aterm{p}{\gact{\phi}{a}}$;
\item
if $\rsgn{p} = \phi$, then there exists a closed term $\psi$ of sort 
$\Prop$ such that $\rsgn{q} = \psi$ and $\phi \LEqv \psi$;
\item
if $\rsgn{q} = \psi$, then there exists a closed term $\phi$ of sort 
$\Prop$ such that $\rsgn{p} = \phi$ and $\psi \LEqv \phi$.
\end{itemize}
Two closed \ctBPAps\ terms $p,q$ of sort $\Proc$ are \emph{bisimulation 
equivalent}, written $p \bisim q$, if there exists a bisimulation $R$
such that $(p,q) \in R$.
Let $R$ be a bisimulation such that $(p,q) \in R$.
Then we say that $R$ is a bisimulation \emph{witnessing} $p \bisim q$.

Henceforth, we will loosely say that a relation contains all closed 
substitution instances of an equation if it contains all pairs $(t,t')$ 
such that $t = t'$ is a closed substitution instance of the equation.

Because a transition on one side may be simulated by a set of 
transitions on the other side, a bisimulation as defined above is called
a \emph{splitting} bisimulation in~\cite{BM05a}.

Bisimulation equivalence is a congruence with respect to the operators
of \ctBPAps.
\begin{proposition}[Congruence]
\label{proposition-congr-ctBPAps}
For all closed \ctBPAps\ terms $p,q,p',q'$ of sort $\Proc$ and closed 
\ctBPAps\ terms $\phi$ of sort $\Prop$, $p \bisim q$ and $p' \bisim q'$
imply $p \altc p' \bisim q \altc q'$, $p \seqc p' \bisim q \seqc q'$, 
$\phi \gc p \bisim \phi \gc q$, and $\phi \emi p \bisim \phi \emi q$.
\end{proposition}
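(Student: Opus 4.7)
The plan is to prove the four congruence claims separately, in each case by exhibiting a concrete splitting bisimulation built from the witness bisimulations $R$ and $R'$ for $p \bisim q$ and $p' \bisim q'$. Specifically, I would take $R_\altc = R \cup R' \cup \{(p \altc p',\, q \altc q')\}$ for alternative composition, $R_\seqc = R' \cup \{(r \seqc p',\, s \seqc q') \where (r,s) \in R\}$ for sequential composition (the first summand handles the case in which the derivative of $r \seqc p'$ is just $p'$, which occurs when $r$ terminates), $R_\gc = R \cup \{(\phi \gc r,\, \phi \gc s) \where (r,s) \in R\}$ for the guarded command, and $R_\emi = R \cup \{(\phi \emi r,\, \phi \emi s) \where (r,s) \in R\}$ for signal emission. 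For each such relation the six clauses of the bisimulation definition must be verified.

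The signal-emission clauses follow immediately from the transition rules for $\rsgn{\cdot}$ in Table~\ref{sos-ctBPAps}, which are compositional in the subterms, together with the fact that $\LEqv$ is preserved by the connectives of LP$^{\IImpl,\False}$ (they are truth-functional on $\{\true,\false,\both\}$). For the action-step and action-termination clauses, I would case-split on the outermost transition rule applied to the compound term on the left, feed its premise together with the fixed valuation $\nu$ supplied by the clause into the bisimulation hypothesis for $R$ or $R'$, obtain a matched label and derivative, and reassemble a matching transition on the right using the same rule. The derivatives end up related by the candidate relation by construction.

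The main obstacle is the interplay between the side conditions $\rsgn{\cdot} \notin [\False]$ on the transition rules and the pointwise valuation condition $\nu(\rsgn{\cdot}) \neq \false$ in the bisimulation clauses: to invoke the bisimulation hypothesis on a subterm under the valuation $\nu$, I must argue that $\nu$ does not falsify the root signal of that subterm either. This has to be checked once per operator from the truth-tables in Section~\ref{sect-LP-iimpl-false}. For instance, for $\phi \gc p$ the equation $\rsgn{\phi \gc p} = \phi \IImpl \rsgn{p}$ together with the truth-table of $\IImpl$ forces $\nu(\rsgn{p}) \neq \false$ whenever $\nu(\phi) \neq \false$ and $\nu(\phi \IImpl \rsgn{p}) \neq \false$; analogous arguments for $\altc$, $\seqc$, and $\emi$ use only the truth-table of $\CAnd$. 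Once this bookkeeping is settled, side conditions on labels such as $\phi \CAnd \chi \notin [\False]$ are automatically preserved on the right, because the matched label $\chi'$ satisfies $\nu(\chi) = \nu(\chi')$, so the reconstructed compound transition is a valid SOS derivation and its label matches in the sense of splitting bisimulation.
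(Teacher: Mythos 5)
Your proof is correct, but it takes a genuinely different route from the paper's. The paper does not construct witnessing bisimulations at all: it reformulates the transition rules, replacing each transition carrying a propositional label $\phi$ by a family of transitions indexed by the valuations $\nu$ with $\nu(\phi) \neq \false$ (and similarly for signals), observes that standard bisimilarity for the reformulated rules coincides with the splitting bisimilarity of Section~\ref{sect-sem-ctBPAps}, and then invokes the general meta-theorem that bisimilarity is a congruence for any complete transition system specification in panth format. Your direct construction of the four candidate relations is the more elementary, self-contained alternative, and it correctly isolates where the real work lies: propagating the constraint $\nu(\rsgn{\cdot}) \neq \false$ from a compound term to its components via the truth tables of $\CAnd$ and $\IImpl$, and showing that reconstructed labels stay outside $[\False]$ because they agree with the original labels under some valuation that does not falsify them. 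The one step you leave implicit is that firing the rules on the right-hand side also requires the root-signal side conditions there, e.g.\ $\rsgn{q \altc q'} \notin [\False]$; this does follow, since the signal clauses of the witnessing bisimulations give $\rsgn{p} \LEqv \rsgn{q}$ and $\rsgn{p'} \LEqv \rsgn{q'}$, truth-functionality of the connectives lifts this to the compound signals, and $[\False]$ is a $\LEqv$-equivalence class --- but it should be said. What the paper's route buys is brevity and reuse of established meta-theory (the same reformulation is reused for the operational conservativity argument in the proof of Theorem~\ref{theorem-completeness-ctACPps}); what your route buys is that no correspondence between the two notions of bisimulation has to be set up, and the role of the three-valued truth tables and the $[\False]$ side conditions is made fully explicit.
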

\begin{proof}
We can reformulate the transition rules such that:
\begin{itemize}
\item
bisimulation equivalence based on the reformulated transition rules 
according to the standard definition of bisimulation equivalence 
coincides with bisimulation equivalence based on the original transition 
rules according to the definition of bisimulation equivalence given 
above;
\item
the reformulated transition rules make up a complete transition system 
specification in panth format.
\end{itemize}
The reformulation goes like the one for the transition rules for BPAps
outlined in~\cite{BB94b}.
The proposition follows now immediately from the well-known result that
bisimulation equivalence according to the standard definition of 
bisimulation equivalence is a congruence if the transition rules 
concerned make up a complete transition system specification in panth 
format (see e.g.~\cite{FG96a}).
\qed
\end{proof}
%
The underlying idea of the reformulation referred to above is that we 
replace each transition \smash{$\astep{p}{\gact{\phi}{a}}{p'}$} by a 
transition \smash{$\astep{p}{\gact{\nu}{a}}{p'}$} for each valuation 
$\nu$ such that $\nu(\phi) \neq \false$, and likewise 
\smash{$\aterm{p}{\gact{\phi}{a}}$} and $\rsgn{p} = \phi$.
Thus, in a bisimulation, a transition on one side must be simulated by a
single transition on the other side.
We did not present the reformulated structural operational semantics
in this paper because it is, in our opinion, intuitively less appealing.  

\ctBPAps\ is sound with respect to $\bisim$ for equations between closed
terms.
\begin{theorem}[Soundness]
\label{theorem-soundness-ctBPAps}
For all closed \ctBPAps\ terms $p,q$ of sort $\Proc$, $p = q$ is 
derivable from the axioms of \ctBPAps\ only if $p \bisim q$.
\end{theorem}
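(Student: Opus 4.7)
The plan is to follow the standard template for soundness of an algebraic process theory. An equational derivation of $p = q$ is built from the axioms of \ctBPAps\ using reflexivity, symmetry, transitivity, and congruence with respect to the operators. Reflexivity, symmetry, and transitivity of $\bisim$ are immediate from the definition, and congruence is Proposition~\ref{proposition-congr-ctBPAps}. Thus, by induction on the structure of the derivation, it suffices to show that $\bisim$ contains all closed substitution instances of every axiom in Table~\ref{axioms-ctBPAps}.

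For each axiom $t = t'$, I would exhibit a bisimulation witnessing $\sigma(t) \bisim \sigma(t')$ for every closed substitution $\sigma$. The standard recipe is to take the symmetric, reflexive closure of the set of such closed instances and, where necessary, enlarge it with further pairs so that it is closed under the transition rules in Table~\ref{sos-ctBPAps}; one then checks the six clauses of the bisimulation definition against those rules. For the \BPAd\ axioms A1--A7 this is entirely routine; the only additional subtlety here is that the root signals on the two sides must agree modulo $\LEqv$, which follows from the associativity, commutativity, and idempotence of $\CAnd$ in Table~\ref{laws-lequiv}, together with $\True \CAnd \phi \LEqv \phi$.

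For the axioms involving the new operators (GC1--GC7, NE1--NE3, SE1--SE8), the required root-signal equivalences line up with specific laws in Table~\ref{laws-lequiv}: for instance, GC6 reduces to law~(16), GC7 to law~(14), and both GC4 and SE8 to law~(13). Axiom IMP is handled by invoking strong completeness of LP$^{\IImpl,\False}$ to convert $\pEnt \phi \Iff \psi$ into $\phi \LEqv \psi$ (using property~(f) of Section~\ref{sect-LP-iimpl-false}), after which $\LEqv$-equivalent propositions may be substituted for each other in root signals, in guards, and in the side conditions $\psi \notin [\False]$, $\phi \CAnd \psi \notin [\False]$, and $\chi \notin [\False]$ appearing in the transition rules.

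The hard part, I expect, will be the bookkeeping for the signal-emission and guarded-command axioms SE4--SE8 and GC4--GC7. The splitting form of the bisimulation definition forces one to quantify over all valuations $\nu$ making both the ambient root signal and the active guard non-false, and to produce, for each such $\nu$, a matching transition on the other side whose guard has the same $\nu$-value. Because the transition rules for $\gc$ and $\emi$ carry the root signal as a side condition, one must carefully track how this signal is affected by the operator sitting above the transitioning subterm. The case analysis is routine but voluminous; the laws in Table~\ref{laws-lequiv} are precisely what is needed to make each case close.
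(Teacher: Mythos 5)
Your proposal matches the paper's own proof: both reduce soundness to the closed substitution instances of each axiom via the congruence property (Proposition~\ref{proposition-congr-ctBPAps}), then exhibit for each axiom a witnessing bisimulation consisting of the closed instances of that axiom (closed up under the transition rules, e.g.\ by adding the identity pairs and, for A5, instances of SE5), with the laws of Table~\ref{laws-lequiv} doing the work of matching root signals and guards up to $\LEqv$. Your identification of which specific laws are needed for GC4, GC6, GC7, and SE8, and your explicit handling of IMP via property~(f), are details the paper leaves implicit, but the approach is the same.
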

\begin{proof}
Because of Proposition~\ref{proposition-congr-ctBPAps}, it is sufficient
to prove the theorem for all closed substitution instances of each axiom 
of \ctBPAps.

For each axiom, we can construct a bisimulation $R$ witnessing 
$p \bisim q$ for all closed substitution instances $p = q$ of the axiom 
as follows:
\begin{itemize}
\item
in the case of A1--A4 and A6, we take the relation $R$ that consists of 
all closed substitution instances of the axiom concerned and the 
equation $x = x$;
\item
in the case of A5, we take the relation $R$ that consists of all closed 
substitution instances of A5, SE5, and the equation $x = x$;
\item
in the case of A7, NE1--NE3, GC2--GC3, and SE2--SE3, we take the 
relation $R$ that consists of all closed substitution instances of the 
axiom concerned;
\item
in the case of GC1, GC4--GC7, SE1, and SE4--SE8, we take the relation 
$R$ that consists of all closed substitution instances of the axiom 
concerned and the equation $x = x$.
\end{itemize}
The laws from property~(8) of LP$^{\IImpl,\False}$ mentioned in 
Section~\ref{sect-LP-iimpl-false} are needed to check that these 
relations are witnessing ones.
\qed
\end{proof}
The proof of Theorem~\ref{theorem-soundness-ctBPAps} goes along the same 
line as the soundness proof for BPAps outlined in~\cite{BB94b}.
The laws from property~(8) of LP$^{\IImpl,\False}$ mentioned in 
Section~\ref{sect-LP-iimpl-false} are laws that LP$^{\IImpl,\False}$ has 
in common with classical propositional logic.
They are needed in the soundness proof for BPAps as well, but their use 
is left implicit in the proof outline given in~\cite{BB94b}.

\ctBPAps\ is complete with respect to $\bisim$ for equations between
closed terms.

\begin{theorem}[Completeness]
\label{theorem-completeness-ctBPAps}
For all closed \ctBPAps\ terms $p,q$ of sort $\Proc$, $p = q$ is 
derivable from the axioms of \ctBPAps\ if $p \bisim q$.
\end{theorem}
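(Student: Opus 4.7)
The plan is to follow the standard approach to ACP-style completeness: reduce to basic terms, then induct on depth. By Proposition~\ref{proposition-elim-ctBPAps}, every closed term of sort $\Proc$ is provably equal to some basic term; by Proposition~\ref{proposition-congr-ctBPAps} and Theorem~\ref{theorem-soundness-ctBPAps}, these reductions preserve $\bisim$. Hence it suffices to prove that, for all $p,q \in \cB$ with $p \bisim q$, the equation $p = q$ is derivable. I would induct on the sum of the depths of $p$ and $q$. In the base case $p = \nex$, the signal clauses of the bisimulation force $\rsgn{q} \LEqv \False$; inspection of the standard basic-term form then shows that every non-$\nex$ basic term has a root signal $\chi \notin [\False]$, so $q = \nex$ and the equation holds trivially by reflexivity.

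For the inductive step, I would write $p$ and $q$ in the standard form $\chi \emi \dead \altc \Altc{i \in I} \phi_i \gc a_i \seqc p_i \altc \Altc{j \in J} \psi_j \gc b_j$, with root signals $\chi_p$ and $\chi_q$. The signal clauses of the bisimulation yield $\chi_p \LEqv \chi_q$, and IMP promotes this to the derivable equation $\chi_p = \chi_q$. For each action summand $\phi_i \gc a_i \seqc p_i$ of $p$, the step clause of the bisimulation together with the finiteness of the summand collection of $q$ yields a finite set $K_i$ of $q$-summands of the form $\phi'_k \gc a_i \seqc q_k$ with $p_i \bisim q_k$, such that every valuation $\nu$ with $\nu(\chi_p) \neq \false$ and $\nu(\phi_i) \neq \false$ admits some $k \in K_i$ with $\nu(\phi'_k) = \nu(\phi_i)$. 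The induction hypothesis would then let me replace each $q_k$ by $p_i$ up to derivable equality. Using the laws in Table~\ref{laws-lequiv} together with the internalized equivalence connective $\Iff$, I would next derive the LP$^{\IImpl,\False}$-equivalence that expresses $\chi_p \CAnd \phi_i$ as the disjunction of the matching combinations $\chi_p \CAnd \phi'_k \CAnd (\phi_i \Iff \phi'_k)$ over $k \in K_i$; IMP, GC6, GC7 and A4 then translate the summand $\phi_i \gc a_i \seqc p_i$ into the corresponding collection of $q$-summands. A symmetric argument covers the converse direction, and an analogous argument handles the terminating summands $\psi_j \gc b_j$. Collecting all summands and applying A3 yields $p = q$.

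The hard part will be deriving the logical equivalences that match each $\phi_i$ with the appropriate combination of the $\phi'_k$. Because the bisimulation is a splitting one, the matching has to be carried out pointwise at the level of three-valued valuations, requiring the strong condition $\nu(\phi_i) = \nu(\phi'_k)$ rather than mere joint satisfiability. Bringing this matching into a form amenable to IMP will rely crucially on the internalization of $\LEqv$ by $\Iff$, on the laws in Table~\ref{laws-lequiv}, and possibly on an auxiliary three-valued normalization lemma that makes the case analysis on valuations explicit enough to be lifted through IMP to an equation of \ctBPAps.
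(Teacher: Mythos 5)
Your argument is essentially sound, but it follows a genuinely different route from the paper's. The paper does not match summands at all: it introduces a one-step reduction $\tsredi$ on basic terms that deletes duplicate summands and joins two summands $\phi \gc a \seqc q'$ and $\psi \gc a \seqc q'$ with \emph{syntactically identical} continuations into $(\phi \COr \psi) \gc a \seqc q'$, then shows that $\tsred$ is strongly normalizing, sound for $\bisim$, and derivable, and that bisimilar normal forms are equal up to A1 and A2 (by a simultaneous induction on the number of action occurrences, in the style of Theorem~2.12 of the Bergstra--Klop completeness proof); completeness then follows because $\bisim$ coincides with convertibility on basic terms. Your direct depth induction with two-sided summand inclusion is the other classical strategy, and your key device --- decomposing $\chi_p \CAnd \phi_i$ as the disjunction over $k \in K_i$ of $\chi_p \CAnd \phi'_k \CAnd (\phi_i \Iff \phi'_k)$ --- is semantically valid, since $\nu(\phi_i \Iff \phi'_k) \neq \false$ holds exactly when $\nu(\phi_i) = \nu(\phi'_k)$; moreover, the step you flag as hard is in fact immediate, because by property~(f) and strong completeness every valid logical equivalence between closed $\Prop$-terms is already an instance of IMP, so no auxiliary normalization lemma is needed. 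Two points do need repair. First, the transformed pieces $(\chi_p \CAnd \phi'_k \CAnd (\phi_i \Iff \phi'_k)) \gc a_i \seqc p_i$ are \emph{not} literally the $q$-summands (their conditions are strictly stronger), so you cannot rewrite $p$ into $q$; you must run the inclusion argument in both directions, absorbing each piece into its matching $q$-summand via GC7, IMP and the absorption law $A \COr (A \CAnd B) \LEqv A$ to obtain $p \altc q = q$ and symmetrically $q \altc p = p$. Second, summands whose continuation has root signal in $[\False]$, or with $\chi_p \CAnd \phi_i \in [\False]$, generate no transitions and hence an empty matching set; they must be discharged beforehand via NE3, GC2, GC3 and A6, with the empty disjunction read as $\False$. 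With those repairs your proof goes through: it buys you freedom from setting up the rewriting system and its normal-form uniqueness lemma, at the price of the explicit valuation-indexed matching analysis that the paper's normal forms render unnecessary.
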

\begin{proof}
By Proposition~\ref{proposition-elim-ctBPAps} and 
Theorem~\ref{theorem-soundness-ctBPAps}, it is sufficient to prove the
theorem for basic terms $p$ and $q$.

For $p,p' \in \cB$, $p'$ is called a \emph{basic subterm} of $p$ if 
$p' \equiv p$ or there exists an $a \in \Act$ such that $a \seqc p'$ is 
a subterm of $p$.

We introduce a reduction relation $\tsred$ on $\cB$.
The one-step reduction relation $\tsredi$ on $\cB$ is inductively 
defined as follows:
\begin{itemize}
\item
if $p'$ is a basic subterm of $p$ and $q'$ occurs twice as summand in 
$p'$, then $p \tsredi r$ where $r$ is $p$ with one occurrence of $q'$ 
removed;
\item
if $p'$ is a basic subterm of $p$ and both $\phi \gc a \seqc q'$ and
$\psi \gc a \seqc q'$ occur as summand in $p'$, then $p \tsredi r$ where 
$r$ is $p$ with the occurrence of $\phi \gc a \seqc q'$ replaced by
$\phi \COr \psi \gc a \seqc q'$ and the occurrence of 
$\psi \gc a \seqc q'$ removed;
\item
if $p'$ is a basic subterm of $p$ and both $\phi \gc a$ and $\psi \gc a$ 
occur as summand in $p'$, then $p \tsredi r$ where $r$ is $p$ with the 
occurrence of $\phi \gc a$ replaced by $\phi \COr \psi \gc a$ and the 
occurrence of $\psi \gc a$ removed.
\end{itemize}
The one-step reductions correspond to sharing of double states and 
joining of transitions as in~\cite{BM05d}.
The reduction relation $\tsred$ is the reflexive and transitive closure 
of $\tsredi$, and the conversion relation $\tscon$ is the reflexive and 
transitive closure of $\tsredi \union \tsredi^{-1}$.

The following are important properties of $\tsredi$:
\begin{enumerate}
\item[(1)]
$\tsred$ is strongly normalizing;
\item[(2)]
for all $p,q \in \cB$, $p \tsred q$ only if $p \bisim q$;
\item[(3)]
for all $p,q \in \cB$ that are in normal form, $p \bisim q$ only if  
$p = q$ is derivable from axioms A1 and A2;
\item[(4)]
for all $p,q \in \cB$, $p \tsredi q$ only if $p = q$ is derivable from 
the axioms of \ctBPAps.
\end{enumerate}
Verifying properties~(1), (2), and~(4) is trivial.
Property~(3) can be verified by proving it, simultaneously with the 
property
\begin{enumerate}
\item[]
for all $p \in \cB$ that are in normal form, any bisimulation between 
$p$ and itself is the identity relation,
\end{enumerate}
by induction on the number of occurrences of a constant from $\Act$ in 
$p$ and $q$.
The proof is similar to the proof of Theorem~2.12 from~\cite{BK85b}, but
easier.

From properties~(1), (2) and~(3), it follows immediately that, for all 
$p,q \in \cB$, $p \bisim q$ iff $p \tscon q$.
From this and property~(4), it follows immediately that, for all 
$p,q \in \cB$, $p \bisim q$ only if $p = q$ is derivable from the 
axioms of \ctBPAps.
\qed
\end{proof}

\section{Contradiction-Tolerant ACP with Propositional Signals}
\label{sect-ctACPps}

In this section, we present the contradiction-tolerant version of ACPps.
This version, which is called \ctACPps, is an extension of \ctBPAps\
that supports parallelism and communication.

In \ctACPps, just as in \ctBPAps, it is assumed that a fixed but 
arbitrary finite set $\Act$ of actions, with $\dead \not\in \Act$, and a 
fixed but arbitrary finite set $\AProp$ of atomic propositions have been 
given.
In \ctACPps, it is further assumed that a fixed but arbitrary commutative 
and associative \emph{communication} function 
$\funct{\commm}{\Actd \x \Actd}{\Actd}$, such that 
$\dead \commm a = \dead$ for all $a \in \Actd$, has been given.
The function $\commm$ is regarded to give the result of synchronously
performing any two actions for which this is possible, and to be $\dead$
otherwise.

The algebraic theory \ctACPps\ has the sorts, constants and operators 
of \ctBPAps\ and in addition the following operators:
\begin{itemize}
\item
the binary \emph{parallel composition} operator 
$\funct{\parc}{\Proc \x \Proc}{\Proc}$;
\item
the binary \emph{left merge} operator 
$\funct{\leftm}{\Proc \x \Proc}{\Proc}$;
\item
the binary \emph{communication merge} operator 
$\funct{\commm}{\Proc \x \Proc}{\Proc}$;
\item
for each $H \subseteq \Act$, 
the unary \emph{encapsulation} operator 
$\funct{\encap{H}}{\Proc}{\Proc}$.
\end{itemize}
We use infix notation for the additional binary operators as well.

The constants and operators of \ctACPps\ to build terms of sort $\Proc$
are the constants and operators of \ACP\ and additionally the guarded 
command operator and the signal emission operator.

Let $p$ and $q$ be closed terms of sort $\Proc$.
Intuitively, the additional operators can be explained as follows:
\begin{itemize}
\item
$p \parc q$ behaves as the process that proceeds with $p$ and $q$ in 
parallel, the proposition that holds at the start of $p \parc q$ is the 
conjunction of the propositions that hold at the start of $p$ and $q$;
\item
$p \leftm q$ behaves the same as $p \parc q$, except that it starts 
with performing an action of $p$, the proposition that holds at the 
start of $p \leftm q$ is the conjunction of the propositions that hold 
at the start of $p$ and $q$;
\item
$p \commm q$ behaves the same as $p \parc q$, except that it starts 
with performing an action of $p$ and an action of $q$ synchronously, the 
proposition that holds at the start of $p \commm q$ is the conjunction 
of the propositions that hold at the start of $p$ and~$q$;
\item
$\encap{H}(p)$ behaves the same as $p$, except that the actions in $H$
are blocked, the proposition that holds at the start of $\encap{H}(p)$ 
is the proposition that holds at the start of $p$.
\end{itemize}

The axioms of \ctACPps\ are the axioms of \ctBPAps\ and the additional
axioms given in Table~\ref{axioms-ctACPps}.%
\begin{table}[!tb]
\caption{Additional axioms for \ctACPps}
\label{axioms-ctACPps}
\begin{eqntbl}
\begin{axcol}
x \parc y =
     x \leftm y \altc y \leftm x \altc x \commm y         & \ax{CM1} \\
a \leftm x = a \seqc x \altc \encap{\Act}(x)              & \ax{CM2S}\\
a \seqc x \leftm y = a \seqc (x \parc y) \altc \encap{\Act}(y)
                                                          & \ax{CM3S}\\
(x \altc y) \leftm z = x \leftm z \altc y \leftm z        & \ax{CM4} \\
a \seqc x \commm b = (a \commm b) \seqc x                 & \ax{CM5} \\
a \commm b \seqc x = (a \commm b) \seqc x                 & \ax{CM6} \\
a \seqc x \commm b \seqc y = (a \commm b) \seqc (x \parc y)
                                                          & \ax{CM7} \\
(x \altc y) \commm z = x \commm z \altc y \commm z        & \ax{CM8} \\
x \commm (y \altc z) = x \commm y \altc x \commm z        & \ax{CM9} 
\\ {} \\
(\phi \gc x) \leftm y = \phi \gc (x \leftm y) \altc \encap{\Act}(y)
                                                          & \ax{GC8S}\\
(\phi \gc x) \commm y = \phi \gc (x \commm y) \altc \encap{\Act}(y)
                                                          & \ax{GC9S}\\
x \commm (\phi \gc y) = \phi \gc (x \commm y) \altc \encap{\Act}(x)
                                                          & \ax{GC10S}\\
\encap{H}(\phi \gc x) = \phi \gc \encap{H}(x)             & \ax{GC11}
\end{axcol}
\qquad
\begin{axcol}
a \commm b = b \commm a                                   & \ax{C1} \\
(a \commm b) \commm c = a \commm (b \commm c)             & \ax{C2} \\
\dead \commm a = \dead                                    & \ax{C3} 
\\ {} \\ {} \\
\encap{H}(a) = a            \hfill \mif a \not\in H       & \ax{D1} \\
\encap{H}(a) = \dead            \hfill \mif a \in H       & \ax{D2} \\
\encap{H}(x \altc y) =
                    \encap{H}(x) \altc \encap{H}(y)       & \ax{D3} \\
\encap{H}(x \seqc y) =
                    \encap{H}(x) \seqc \encap{H}(y)       & \ax{D4} 
\\ {} \\
(\phi \emi x) \leftm y = \phi \emi (x \leftm y)           & \ax{SE9}\\
(\phi \emi x) \commm y = \phi \emi (x \commm y)           & \ax{SE10}\\
x \commm (\phi \emi y) = \phi \emi (x \commm y)           & \ax{SE11}\\
\encap{H}(\phi \emi x) = \phi \emi \encap{H}(x)           & \ax{SE12}
\end{axcol}
\end{eqntbl}
\end{table}
In this table, $a,b,c$ stand for arbitrary constants from 
$\Act \union \set{\dead}$ and $\phi$ stands for an arbitrary closed term 
of sort $\Prop$.
A1--A7, CM1--CM9 with CM1S and CM2S replaced by $a \leftm x = a \seqc x$ 
and $a \seqc x \leftm y = a \seqc (x \parc y)$, C1--C3, and D1--D4 are 
the axioms of \ACP\ (see e.g.~\cite{BW90}).
GC11 and SE9--SE12 have been taken from~\cite{BB94b} and GC9S and GC10S 
have been taken from~\cite{BB94b} with subterms of the form 
$\rsgn{x} \emi \dead$ replaced by $\encap{\Act}(x)$.
CM2S, CM3S and GC8S differ really from the corresponding axioms
in~\cite{BB94b} due to the choice of having as the proposition that 
holds at the start of the left merge of two processes, as in the case of 
the communication merge, always the conjunction of the propositions that
hold at the start of the two processes.

The following equations are among the equations derivable from the 
axioms of \ctACPps:
\begin{ldispl}
(\phi \emi x) \parc (\psi \emi y) = 
(\phi \CAnd \psi ) \emi (x \parc y)\;,
\\
\hfill x \parc \nex = \nex\;, \hfill \nex \parc x = \nex\;. \hfill
\end{ldispl}%

Let $\phi$ be a closed term of sort $\Prop$ such that
not $\pEnt \phi \,\Iff\, \False$ and
not $\pEnt \Not \phi \,\Iff\, \False$.
Then, because not 
$\pEnt \phi \CAnd \Not \phi \,\Iff\, \False$, 
we have that
$a \seqc (\phi \emi x \parc \Not \phi \emi y) =
 a \seqc (\False \emi (x \parc y)) = \dead$,
which is derivable from the axioms of ACPps, is not 
derivable from the axioms of \ctACPps. 
This shows the main difference between \ctACPps\ and ACPps: the 
par\-allel composition of two processes of which the propositions 
that hold at the start of them are contradictory does not lead to an 
inconsistency in \ctACPps, whereas it does lead to an inconsistency in 
ACPps.
This is why \ctACPps\ is called the contradiction-tolerant version of
ACPps.

Let $\phi$ be a closed term of sort $\Prop$ such that
not $\pEnt \phi \,\Iff\, \False$ and
not $\pEnt \Not \phi \,\Iff\, \False$.
Assume that $b \commm c = d$.
Then we can derive 
$a \seqc (\phi \emi b \parc \Not \phi \emi c) =
 a \seqc ((\phi \CAnd \Not \phi) \emi (b \parc c)) = 
 a \seqc 
   ((\phi \CAnd \Not \phi) \emi (b \seqc c \altc c \seqc b \altc d)) =
 \dead$ 
from the axioms of BPAps because, in the case of BPAps, 
$a \seqc (\phi \emi b \parc \Not \phi \emi c)$ 
is not capable of doing anything.
We can only derive
$a \seqc (\phi \emi b \parc \Not \phi \emi c) =
 a \seqc ((\phi \CAnd \Not \phi) \emi (b \parc c)) = 
 a \seqc 
   ((\phi \CAnd \Not \phi) \emi (b \seqc c \altc c \seqc b \altc d))$  
from the axioms of \ctBPAps\ because, in the case of \ctBPAps, 
$a \seqc (\phi \emi b \parc \Not \phi \emi c)$ 
is capable of first performing $a$ and next either performing $b$ and 
$c$ in either order and after that terminating successfully or 
performing $d$ and after that terminating successfully --- although the 
proposition that holds at the start of the process that remains after 
performing $a$ is the contradiction $\phi \CAnd \Not \phi$.

Let $\phi$ be a closed term of sort $\Prop$ such that
not $\pEnt \phi \,\Iff\, \False$ and
not $\pEnt \Not \phi \,\Iff\, \False$.
Then, because
$\pEnt \Cons \phi \CAnd \phi \CAnd \Not \phi \,\Iff\, \False$,  
we have that
$a \seqc (\Cons \phi \emi (\phi \emi x \parc \Not \phi \emi y)) =
 a \seqc (\False \emi (x \parc y)) = \dead$
is derivable from the axioms of \ctACPps.
This shows that it can be enforced by means of a consistency proposition
($\Cons \phi$) that the parallel composition of two processes of which 
the propositions that hold at the start of them are contradictory leads 
to an inconsistency in \ctACPps.

All closed \ctACPps\ terms of sort $\Proc$ can be reduced to a basic 
term.
\begin{proposition}[Elimination]
\label{proposition-elim-ctACPps}
For all closed \ctACPps\ terms $p$ of sort $\Proc$, there exists a 
$q \in \cB$ such that $p = q$ is derivable from the axioms of \ctACPps.
\end{proposition}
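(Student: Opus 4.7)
The plan is to extend the elimination result for \ctBPAps\ (Proposition~\ref{proposition-elim-ctBPAps}) by induction on the structure of closed \ctACPps\ terms $p$ of sort $\Proc$. The cases where $p$ is built using only constants and operators of \ctBPAps\ (i.e.\ $\nex$, $a$, $\dead$, $p' \altc p''$, $p' \seqc p''$, $\phi \gc p'$, and $\phi \emi p'$) go through exactly as in the proof of Proposition~\ref{proposition-elim-ctBPAps}, using the very same auxiliary claims about sequential composition and guarded command applied to basic terms. Hence the only new work is to cover the four additional operator cases $\parc$, $\leftm$, $\commm$, and $\encap{H}$, and by the induction hypothesis we may assume their subterms are already basic.

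It therefore suffices to establish the following four auxiliary claims, in this order, each by induction on the structure of the basic term(s) involved:
\begin{enumerate}
\item[(i)] for all $H \subseteq \Act$ and $p \in \cB$, there exists $p'\in\cB$ with $\encap{H}(p)=p'$ derivable;
\item[(ii)] for all $p,q \in \cB$, there exists $r \in \cB$ with $p \commm q = r$ derivable;
\item[(iii)] for all $p,q \in \cB$, there exists $r \in \cB$ with $p \leftm q = r$ derivable;
\item[(iv)] for all $p,q \in \cB$, there exists $r \in \cB$ with $p \parc q = r$ derivable.
\end{enumerate}
Claim~(i) is proved by pushing $\encap{H}$ inward through $\altc$, $\gc$, and $\emi$ using D3, GC11, and SE12, reducing atomic summands $\phi \gc a$ via D1 or D2 (in the latter case using GC3 to obtain $\phi \gc \dead = \dead$, which is absorbed by A6, or alternatively keeping $\dead$ as a summand and removing it via A6), handling summands $\phi \gc a \seqc p'$ with the help of D4 and the induction hypothesis, and leaving $\nex$ and $\chi \emi \dead$ fixed. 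Claim~(iv) then follows from (ii) and (iii) immediately via CM1 together with Proposition~\ref{proposition-elim-ctBPAps} applied to the resulting alternative composition.

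For Claims~(ii) and (iii) I proceed by nested induction on the structures of the basic-term arguments. Using CM8, CM9 (for $\commm$) and CM4 (for $\leftm$), one reduces the problem to the case in which the two arguments are single summands from the grammar of $\cB$; the root-signal summand $\chi \emi \dead$ is stripped off using SE10, SE11, SE9 together with (iv$_{\mathrm{prev}}$) or directly absorbed into the resulting $\emi$-prefix; the guarded-command layer is peeled off using GC9S, GC10S (for $\commm$) and GC8S (for $\leftm$); and the remaining interactions between summands of the forms $\phi \gc a$, $\phi \gc a \seqc p'$, and $\nex$ are resolved by CM5--CM7, CM2S, CM3S, together with C1--C3 to evaluate $a \commm b$ and SE2, NE2, NE3 to dispose of $\nex$. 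Crucially, each application of GC8S, GC9S, GC10S, CM2S, or CM3S introduces a subterm of the form $\encap{\Act}(x)$, which is removed by invoking Claim~(i); this is why Claim~(i) must be established first.

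The main obstacle will be Claim~(ii), because the communication merge interacts simultaneously with guarded command, signal emission, and encapsulation: the axioms GC9S and GC10S produce $\encap{\Act}$-summands that must be turned back into basic terms before the outer induction can continue, and the summand $\chi \emi \dead$ encoding the root signal must be threaded through the reduction carefully so that the resulting term still fits the basic-term format (with a single $\emi \dead$-summand for its root signal, and no action summands of the form $\phi \gc a$ with $\phi \in [\False]$). Once Claim~(i) is in place and the summand-level calculation is carried out uniformly, Claims~(iii) and (iv) are essentially bookkeeping variants, and the overall induction closes.
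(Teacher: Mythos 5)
Your proposal is correct and follows essentially the same route as the paper: structural induction on $p$, with the \ctBPAps\ cases discharged by Proposition~\ref{proposition-elim-ctBPAps} and the new operators handled by auxiliary claims on basic terms, where the claims for $\parc$, $\leftm$, and $\commm$ must be proved by an interleaved (simultaneous) induction — your ``(iv$_{\mathrm{prev}}$)'' device is exactly the paper's ``proved simultaneously by structural induction'', since CM3S and CM7 feed $\parc$ on smaller terms back into the $\leftm$ and $\commm$ cases. Your explicit observation that the encapsulation claim must come first because CM2S, CM3S, and GC8S--GC10S introduce $\encap{\Act}$-subterms is a useful detail the paper leaves implicit, but it does not change the argument.
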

\begin{proof}
The proof is straightforward by induction on the structure of closed 
term $p$.
If $p$ is of the form $\nex$, $a$, $p' \altc p''$, $p' \seqc p''$, 
$\phi \gc p'$ or $\phi \emi p'$, then it follows immediately from
the induction hypothesis and Proposition~\ref{proposition-elim-ctBPAps}  
that there exists a $q \in \cB$ such that $p = q$ is derivable from the 
axioms of \ctACPps.
If $p$ is of the form $p' \parc p''$, $p' \leftm p''$, $p' \commm p''$ 
or $\encap{H}(p')$, then it follows immediately from the induction 
hypothesis and claims similar to the ones from the proof of 
Proposition~\ref{proposition-elim-ctBPAps}.
The claims concerning $\parc$, $\leftm$, and $\commm$ are easily proved 
simultaneously by structural induction.
The claim concerning $\encap{H}$ is easily proved by structural 
induction.
\qed
\end{proof}

\section{Semantics of \boldmath{$\ctACPps$}}
\label{sect-sem-ctACPps}

In this section, we present a structural operational semantics of 
\ctACPps\ and show that the axioms of \ctACPps\ are sound and complete 
with respect to this bisimulation equivalence.

We start with the presentation of the structural operational semantics 
of \ctACPps.
The structural operational semantics of \ctACPps\ is described by the
transition rules for \ctBPAps\ and the additional transition rules given 
in Table~\ref{sos-ctACPps}.%
\renewcommand{\textfraction}{0} \renewcommand{\topfraction}{1}%
\begin{table}[!p]
\caption{Additional transition rules for \ctACPps}
\label{sos-ctACPps}
\begin{ruletbl}
\RuleC
{\aterm{x}{\gact{\phi}{a}},\; 
 \rsgn{x \parc y} = \psi,\; \rsgn{y} = \chi}
{\astep{x \parc y}{\gact{\phi}{a}}{y}}
{\psi, \chi \notin [\False]}
\\
\RuleC
{\aterm{y}{\gact{\phi}{a}},\;
 \rsgn{x \parc y} = \psi,\; \rsgn{x} = \chi}
{\astep{x \parc y}{\gact{\phi}{a}}{x}}
{\psi, \chi \notin [\False]}
\\
\RuleC
{\astep{x}{\gact{\phi}{a}}{x'},\;
 \rsgn{x \parc y} = \psi,\; \rsgn{x' \parc y} = \chi}
{\astep{x \parc y}{\gact{\phi}{a}}{x' \parc y}}
{\psi, \chi \notin [\False]}
\\
\RuleC
{\astep{y}{\gact{\phi}{a}}{y'},\;
 \rsgn{x \parc y} = \psi,\; \rsgn{x \parc y'} = \chi}
{\astep{x \parc y}{\gact{\phi}{a}}{x \parc y'}}
{\psi, \chi \notin [\False]}
\\
\RuleC
{\aterm{x}{\gact{\phi}{a}},\; \aterm{y}{\gact{\psi}{b}},\;
 \rsgn{x \parc y} = \chi}
{\aterm{x \parc y}{\gact{\phi \CAnd \psi}{c}}}
{a \commm b = c,\; \phi \CAnd \psi, \chi \notin [\False]}
\\
\RuleC
{\aterm{x}{\gact{\phi}{a}},\; \astep{y}{\gact{\psi}{b}}{y'},\;
 \rsgn{x \parc y} = \chi}
{\astep{x \parc y}{\gact{\phi \CAnd \psi}{c}}{y'}}
{a \commm b = c,\; \phi \CAnd \psi, \chi \notin [\False]}
\\
\RuleC
{\astep{x}{\gact{\phi}{a}}{x'},\; \aterm{y}{\gact{\psi}{b}},\;
 \rsgn{x \parc y} = \chi}
{\astep{x \parc y}{\gact{\phi \CAnd \psi}{c}}{x'}}
{a \commm b = c,\; \phi \CAnd \psi, \chi \notin [\False]}
\\
\RuleC
{\astep{x}{\gact{\phi}{a}}{x'},\; \astep{y}{\gact{\psi}{b}}{y'},\;
 \rsgn{x \parc y} = \chi,\; \rsgn{x' \parc y'} = \chi'}
{\astep{x \parc y}{\gact{\phi \CAnd \psi}{c}}{x' \parc y'}}
{a \commm b = c,\; \phi \CAnd \psi, \chi, \chi' \notin [\False]}
\\
\RuleC
{\aterm{x}{\gact{\phi}{a}},\;
 \rsgn{x \leftm y} = \psi,\; \rsgn{y} = \chi}
{\astep{x \leftm y}{\gact{\phi}{a}}{y}}
{\psi, \chi \notin [\False]}
\\
\RuleC
{\astep{x}{\gact{\phi}{a}}{x'},\;
 \rsgn{x \leftm y} = \psi,\; \rsgn{x' \parc y} = \chi}
{\astep{x \leftm y}{\gact{\phi}{a}}{x' \parc y}}
{\psi, \chi \notin [\False]}
\\
\RuleC
{\aterm{x}{\gact{\phi}{a}},\; \aterm{y}{\gact{\psi}{b}},\;
 \rsgn{x \commm y} = \chi}
{\aterm{x \commm y}{\gact{\phi \CAnd \psi}{c}}}
{a \commm b = c,\; \phi \CAnd \psi, \chi \notin [\False]}
\\
\RuleC
{\aterm{x}{\gact{\phi}{a}},\; \astep{y}{\gact{\psi}{b}}{y'},\;
 \rsgn{x \commm y} = \chi}
{\astep{x \commm y}{\gact{\phi \CAnd \psi}{c}}{y'}}
{a \commm b = c,\; \phi \CAnd \psi, \chi \notin [\False]}
\\
\RuleC
{\astep{x}{\gact{\phi}{a}}{x'},\; \aterm{y}{\gact{\psi}{b}},\;
 \rsgn{x \commm y} = \chi}
{\astep{x \commm y}{\gact{\phi \CAnd \psi}{c}}{x'}}
{a \commm b = c,\; \phi \CAnd \psi, \chi \notin [\False]}
\\
\RuleC
{\astep{x}{\gact{\phi}{a}}{x'},\; \astep{y}{\gact{\psi}{b}}{y'},\;
 \rsgn{x \commm y}  = \chi,\; \rsgn{x' \parc y'}  = \chi'}
{\astep{x \commm y}{\gact{\phi \CAnd \psi}{c}}{x' \parc y'}}
{a \commm b = c,\; \phi \CAnd \psi, \chi, \chi' \notin [\False]}
\\
\RuleC
{\aterm{x}{\gact{\phi}{a}}}
{\aterm{\encap{H}(x)}{\gact{\phi}{a}}}
{a \not\in H}
\qquad
\RuleC
{\astep{x}{\gact{\phi}{a}}{x'}}
{\astep{\encap{H}(x)}{\gact{\phi}{a}}{\encap{H}(x')}}
{a \not\in H}
\eqnsep
\Rule
{\rsgn{x} = \phi,\; \rsgn{y} = \psi}
{\rsgn{x \parc y} = \phi \CAnd \psi}
\qquad
\Rule
{\rsgn{x} = \phi,\; \rsgn{y} = \psi}
{\rsgn{x \leftm y} = \phi \CAnd \psi}
\qquad
\Rule
{\rsgn{x} = \phi,\; \rsgn{y} = \psi}
{\rsgn{x \commm y} = \phi \CAnd \psi}
\qquad
\Rule
{\rsgn{x} = \phi}
{\rsgn{\encap{H}(x)} = \phi}
\end{ruletbl}
\end{table}
In these tables, $a$, $b$, and $c$ stand for arbitrary constants from 
$\Act \union \set{\dead}$ and $\phi$, $\psi$, $\chi$, and $\chi'$ stand 
for arbitrary closed terms of sort $\Prop$.

In Sections~\ref{sect-ctBPAps} and~\ref{sect-ctACPps}, we have touched
upon the main difference between \ctACPps\ and ACPps: the alternative 
and parallel composition of two processes of which the propositions that 
hold at the start of them are contradictory does not lead to an 
inconsistency in \ctACPps, whereas it does lead to an inconsistency in 
ACPps.
However, the transition rules for \ctACPps\ and ACPps seem to be the 
same.
The difference is fully accounted for by the fact that $[\False]$, the
equivalence class of $\False$ modulo logical equivalence, contains in 
the case of LP$^{\IImpl,\False}$ only propositions of the form 
$\phi \CAnd \Not \phi$ with $\phi$ such that either $\phi \LEqv \False$ 
or $\Not \phi \LEqv \False$, whereas it contains in the case of 
classical propositional logic all propositions of the form 
$\phi \CAnd \Not \phi$.

By this fact, in the case of \ctACPps, 
$a \seqc (\phi \emi b \parc \Not \phi \emi c)$ from the example 
preceding Proposition~\ref{proposition-elim-ctACPps} is capable of 
first performing $a$ and next either performing $b$ and $c$ in either 
order and after that terminating successfully or performing $d$ and 
after that terminating successfully --- although the proposition that 
holds at the start of the process that remains after performing $a$ is 
the contradiction $\phi \CAnd \Not \phi$ --- and, in 
the case of ACPps, it is not capable of doing anything.

Bisimulation equivalence is a congruence with respect to the operators 
of \ctACPps.
\begin{proposition}[Congruence]
\label{proposition-congr-ctACPps}
For all closed \ctACPps\ terms $p,q,p',q'$ of sort $\Proc$ and closed 
\ctACPps\ terms $\phi$ of sort $\Prop$, $p \bisim q$ and $p' \bisim q'$
imply $p \altc p' \bisim q \altc q'$, $p \seqc p' \bisim q \seqc q'$, 
$\phi \gc p \bisim \phi \gc q$, $\phi \emi p \bisim \phi \emi q$,
$p \parc p' \bisim q \parc q'$, $p \leftm p' \bisim q \leftm q'$, 
$p \commm p' \bisim q \commm q'$, and 
$\encap{H}(p) \bisim \encap{H}(q)$. 
\end{proposition}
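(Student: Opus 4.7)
The plan is to mimic the proof of Proposition~\ref{proposition-congr-ctBPAps} step by step, extending the reformulation of the transition rules from \ctBPAps\ to cover the four new operators of \ctACPps\ and then appealing to the same standard congruence result for the panth format. Concretely, I would first extend the reformulated transition system specification by replacing every transition with a conditional label $\phi$ (whether it is an action step, an action termination, or a signal emission) by the family of unconditional transitions indexed by valuations $\nu$ with $\nu(\phi)\neq \false$, exactly as outlined after Proposition~\ref{proposition-congr-ctBPAps}. This transformation is applied uniformly to the additional rules in Table~\ref{sos-ctACPps} for $\parc$, $\leftm$, $\commm$, and $\encap{H}$, including the side conditions $\psi,\chi,\chi'\notin[\False]$, which translate into the requirement that the relevant root-signal valuation is not $\false$ under $\nu$.

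Next I would verify the two properties that the proof for \ctBPAps\ relies on. For the first property, I show that the splitting bisimulation of Section~\ref{sect-sem-ctBPAps}, applied to closed \ctACPps\ terms, coincides with the standard bisimulation on the reformulated transition system. The argument is the same as in the \ctBPAps\ case: a single splitting transition $\astep{p}{\gact{\phi}{a}}{p'}$ corresponds bijectively, given a valuation $\nu$ compatible with $\rsgn{p}$ and $\phi$, to the family of unconditional $\nu$-labelled transitions of the reformulation, and the signal-matching clause $\rsgn{p}\LEqv\rsgn{q}$ translates into $\rsgn{p}$ and $\rsgn{q}$ having the same value under every valuation. For the second property I check that the reformulated rules constitute a complete transition system specification in panth format: each rule has only transition and negation-of-termination premises over the direct subterms (or over terms built from them by parallel composition, which is allowed in panth), no look-ahead on variables, and the $\rsgn{\cdot}$ predicates are treated as unary predicate symbols whose defining rules are themselves panth. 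Once both items are in place, the well-known congruence result for panth specifications (the same reference~\cite{FG96a} used for \ctBPAps) delivers congruence of $\bisim$ for every operator of \ctACPps\ in one stroke.

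The main obstacle I expect is the bookkeeping around the signal-emission premises in the additional rules for $\parc$, $\leftm$, and $\commm$, in particular the rules that mention $\rsgn{x'\parc y}$, $\rsgn{x\parc y'}$, and $\rsgn{x'\parc y'}$, because these refer to signals of compound residuals rather than of the direct subterms. I would therefore pay careful attention when recasting these premises: they remain positive premises about the dedicated signal predicate applied to a term built from the subterm variables, which still fits the panth format, but the check that the overall specification is complete (every closed term has a unique $\rsgn$-value and a well-defined transition set) must be done simultaneously with the corresponding check for \ctBPAps, since the rules are mutually recursive. After this bookkeeping, the verification of the panth conditions and of the coincidence between splitting and standard bisimulation is routine, and the congruence conclusion follows exactly as in Proposition~\ref{proposition-congr-ctBPAps}.
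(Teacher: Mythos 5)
Your proposal follows exactly the route the paper takes: the paper's proof of this proposition is a one-line reference back to the proof of Proposition~\ref{proposition-congr-ctBPAps}, i.e.\ reformulate the transition rules (now including those of Table~\ref{sos-ctACPps}) into a complete transition system specification in panth format whose standard bisimulation equivalence coincides with the splitting one, and then invoke the congruence result of~\cite{FG96a}. Your additional attention to the premises involving signals of compound residuals such as $\rsgn{x' \parc y}$ is a sensible elaboration of details the paper leaves implicit, but it does not change the approach.
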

\begin{proof}
The proof goes along the same line as the proof of 
Proposition~\ref{proposition-congr-ctBPAps}.
\qed
\end{proof}

\ctACPps\ is sound with respect to $\bisim$ for equations between closed 
terms.
\begin{theorem}[Soundness]
\label{theorem-soundness-ctACPps}
For all closed \ctACPps\ terms $p,q$ of sort $\Proc$, $p = q$ is 
derivable from the axioms of \ctACPps\ only if $p \bisim q$.
\end{theorem}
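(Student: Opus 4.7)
The strategy mirrors the proof of Theorem~\ref{theorem-soundness-ctBPAps}. By Proposition~\ref{proposition-congr-ctACPps}, $\bisim$ is a congruence with respect to all operators of \ctACPps, so it suffices to verify that every closed substitution instance of every axiom yields bisimilar terms. The axioms inherited from \ctBPAps\ are already handled by Theorem~\ref{theorem-soundness-ctBPAps}, and so the only remaining obligation is to treat, one by one, the additional axioms in Table~\ref{axioms-ctACPps}: CM1--CM9, C1--C3, D1--D4, GC8S--GC11 and SE9--SE12.

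For each additional axiom I would exhibit an explicit witnessing bisimulation $R$ in the style of the \ctBPAps\ soundness proof. For most axioms $R$ can be taken to consist of all closed substitution instances of the axiom itself together with the identity pairs from the equation $x = x$, plus (when simulating a transition requires one further unfolding) one or two additional closed substitution instances of already-derived \ctBPAps\ equations such as A5, SE5 or GC5. The verification that $R$ is a bisimulation then reduces, for every pair in $R$, to a case analysis over the applicable transition rules of Tables~\ref{sos-ctBPAps} and~\ref{sos-ctACPps}: one checks that each action step or termination on one side is matched by a step or termination on the other side whose condition is logically equivalent under every valuation not falsifying the condition or the root signal, and that the root signals of the two terms are logically equivalent. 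These equivalences are obtained by appealing to the laws in Table~\ref{laws-lequiv} (property~(g) of LP$^{\IImpl,\False}$).

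The main obstacle will be the axioms CM2S, CM3S, and GC8S--GC10S, where one side carries an extra summand of the form $\encap{\Act}(x)$ or $\encap{\Act}(y)$ whose only semantic role is to contribute its argument's root signal to the overall root signal without producing any transitions. For these axioms I would match transitions of the left-hand side with transitions of the non-encapsulation summands on the right-hand side, and then verify root-signal equivalence using $\rsgn{\encap{\Act}(r)} = \rsgn{r}$ together with the associativity, commutativity, and idempotence of $\CAnd$ from Table~\ref{laws-lequiv}. Tracking that the conjunctions of signals accumulated on the two sides are genuinely LP$^{\IImpl,\False}$-equivalent is the delicate point, and it relies on the fact that the laws of Table~\ref{laws-lequiv} give $\CAnd$ and $\COr$ exactly their classical behaviour. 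The remaining axioms (CM1, CM4--CM9, C1--C3, D1--D4, GC11, SE9--SE12) are verified by direct case analyses analogous to the $\altc$, $\seqc$, $\gc$, and $\emi$ cases treated in the proof of Theorem~\ref{theorem-soundness-ctBPAps}.
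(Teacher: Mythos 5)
Your proposal is correct and follows essentially the same route as the paper: reduce via the congruence property to closed substitution instances of the individual axioms, and for each additional axiom exhibit a witnessing bisimulation consisting of the axiom's instances together with identity pairs and, where needed, instances of auxiliary derived equations, checking the matching conditions and root signals with the laws of Table~\ref{laws-lequiv}. The only detail worth adding is that for CM1 the auxiliary pairs needed are the closed substitution instances of $x \parc y = y \parc x$ (to match residues of the form $x \parc y'$ against $y' \parc x$ arising from the right left-merge summand), which fits your general recipe but is not among the \ctBPAps\ equations you list as examples.
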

\begin{proof}
Because of Proposition~\ref{proposition-congr-ctACPps}, it is sufficient
to prove the theorem for all closed substitution instances of each axiom 
of \ctACPps.

For each axiom, we can construct a bisimulation $R$ witnessing 
$p \bisim q$ for all closed substitution instances $p = q$ of the axiom 
as follows:
\begin{itemize}
\item
in the case of the axioms of \ctBPAps,we take the same relation as in 
the proof of Theorem~\ref{theorem-soundness-ctBPAps};
\item
in the case of CM1, we take the relation $R$ that consists of all closed 
substitution instances of CM1, the equation $x \parc y = y \parc x$, and 
the equation $x = x$;
\item
in the case of CM2S--CM9, we take the relation $R$ that consists of all 
closed substitution instances of the axiom concerned and the equation 
$x = x$;
\item
in the case of C1--C3 and D1--D2, we take the relation $R$ that consists 
of all closed substitution instances of the axiom concerned;
\item
in the case of D3--D4, GC8S--GC11, and SE9--SE12, we take the relation 
$R$ that consists of all closed substitution instances of the axiom 
concerned and the equation $x = x$.
\end{itemize}
The laws from property~(8) of LP$^{\IImpl,\False}$ mentioned in 
Section~\ref{sect-LP-iimpl-false} are needed to check that these 
relations are witnessing ones.
\qed
\end{proof}

\ctACPps\ is complete with respect to $\bisim$ for equations between
closed terms.
\begin{theorem}[Completeness]
\label{theorem-completeness-ctACPps}
For all closed \ctACPps\ terms $p,q$ of sort $\Proc$, $p = q$ is 
derivable from the axioms of \ctACPps\ if $p \bisim q$.
\end{theorem}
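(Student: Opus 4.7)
The plan is to reduce this completeness theorem to Theorem~\ref{theorem-completeness-ctBPAps} by using the elimination property for \ctACPps. Suppose $p \bisim q$ for closed \ctACPps\ terms of sort $\Proc$. By Proposition~\ref{proposition-elim-ctACPps} there exist basic terms $p',q' \in \cB$ such that $p = p'$ and $q = q'$ are derivable from the axioms of \ctACPps. Theorem~\ref{theorem-soundness-ctACPps} then yields $p \bisim p'$ and $q \bisim q'$, and since $\bisim$ is readily seen from its definition to be an equivalence relation, combining these with the hypothesis gives $p' \bisim q'$.

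The key observation is that bisimulation equivalence for \ctACPps, when restricted to closed basic terms, coincides with bisimulation equivalence for \ctBPAps. Every additional transition rule in Table~\ref{sos-ctACPps} has a conclusion whose subject is headed by one of the operators $\parc$, $\leftm$, $\commm$, or $\encap{H}$, and no such operator occurs anywhere in a basic term. Hence no action step, action termination, or signal emission can be derived for a basic term by means of the extra rules; the transitions and signals of a basic term under the \ctACPps\ operational semantics are exactly those it has under the \ctBPAps\ operational semantics. Consequently $p' \bisim q'$ holds in the sense of \ctBPAps\ as well.

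Applying Theorem~\ref{theorem-completeness-ctBPAps} to $p'$ and $q'$ now produces a derivation of $p' = q'$ from the axioms of \ctBPAps, and hence from the axioms of \ctACPps, which include all axioms of \ctBPAps. Transitivity of equality then gives $p = q$, as required. The only point genuinely needing justification is the coincidence of the two bisimulation equivalences on $\cB$; this amounts to a routine syntactic inspection of the conclusions of the additional transition rules, so I do not foresee any substantive obstacle beyond what has already been dispatched in the proof of Theorem~\ref{theorem-completeness-ctBPAps}.
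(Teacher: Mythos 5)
Your proposal is correct and follows essentially the same route as the paper: reduce both terms to basic terms via Proposition~\ref{proposition-elim-ctACPps}, use soundness of \ctACPps\ and completeness of \ctBPAps, and justify the transfer by the fact that the transition rules for \ctACPps\ are an operational conservative extension of those for \ctBPAps. The only difference is that the paper assembles these ingredients by citing a general meta-theorem (Theorem~3.14 of Verhoef) and establishes conservativity via format conditions (fresh sources, source-dependency) on the reformulated rules, whereas you unfold that meta-theorem and argue conservativity directly by inspecting the sources of the additional rules; both are adequate.
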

\begin{proof}
We have that the axioms of \ctBPAps\ are complete with respect to 
$\bisim$ (Theorem~\ref{theorem-completeness-ctBPAps}), the axioms of 
\ctACPps\ are sound with respect to $\bisim$ 
(Theorem~\ref{theorem-soundness-ctACPps}), and for each closed \ctACPps\ 
term $p$ of sort $\Proc$, there exists a closed \ctBPAps\ term $q$ such 
that $p = q$ is derivable from the axioms of \ctACPps\
(Proposition~\ref{proposition-elim-ctACPps}).
By Theorem 3.14 from~\cite{Ver94a}, the result immediately follows from
this and the claim that the set of transition rules for \ctACPps\ is an 
operational conservative extension of the set of transition rules for 
\ctBPAps.

This claim can easily be proved if we reformulate the transition rules 
for \ctACPps\ in the same way as the transition rules for \ctBPAps\ have 
been reformulated to prove Proposition~\ref{proposition-congr-ctBPAps}.
The operational conservativity can then easily be proved by verifying 
that the reformulated transition rules for \ctACPps\ makes up a complete 
transition system specification, the reformulated transition rules for 
\ctBPAps --- which are included in the reformulated transition rules for 
\ctACPps --- are source-dependent, and the additional transition rules 
have fresh sources (see e.g.~\cite{FV98a}).
\qed
\end{proof}

\section{State Operators}
\label{sect-ctACPps+SO}

In this section, we extend \ctACPps\ with state operators.
The resulting theory is called \ctACPps\textup{{+}SO}.
The state operators introduced here generalize the state operators added 
to \ACP\ in~\cite{BB88}.

The state operators from~\cite{BB88} were introduced to make it easy to 
represent the execution of a process in a state. 
The basic idea was that the execution of an action in a state has effect
on the state, i.e.\ it causes a change of state. 
Moreover, there is an action left when an action is executed in a state.
The main difference between the original state operators and the state 
operators introduced here is that, in the case of the latter, the state 
in which a process is executed determines the proposition that holds at 
its start.
Thus, one application of a state operator may replace many applications 
of the signal emission operator. 

It is assumed that a fixed but arbitrary set $S$ of \emph{states} has
been given, together with functions 
$\funct{\act}{\Act \x S}{\Actd}$, $\funct{\eff}{\Act \x S}{S}$, and
$\funct{\sig}{S}{B}$, where $B$ is the set of all closed terms $\phi$ 
of sort $\Prop$.

For each $s \in S$, we add a unary \emph{state} operator 
$\funct{\state{s}}{\Proc}{\Proc}$ to the operators of \ctACPps.

The state operator $\state{s}$ allows, given the above-mentioned 
functions, processes to be executed in a state.
Let $p$ be a closed term of sort $\Proc$.
Then $\state{s}(p)$ is the process $p$ executed in state $s$.
The function $\act$ gives, for each action $a$ and state $s$,
the action that results from executing $a$ in state $s$.
The function $\eff$ gives, for each action $a$ and state $s$,
the state that results from executing $a$ in state $s$.
The function $\sig$ gives, for each state $s$, the proposition that
holds at the start of any process executed in state $s$.

The additional axioms for $\state{s}$, where $s \in S$, are given in 
Table~\ref{axioms-SO}.
\begin{table}[!tb]
\caption{Axioms for state operators}
\label{axioms-SO}
\begin{eqntbl} 
\begin{axcol}
\state{s}(a) = \sig(s) \emi \act(a,s)                      & \ax{SO1} \\
\state{s}(a \seqc x) = \sig(s) \emi \act(a,s) \seqc \state{\eff(a,s)}(x) 
                                                           & \ax{SO2} \\
\state{s}(x \altc y) = \state{s}(x) \altc \state{s}(y)     & \ax{SO3} \\
\state{s}(\phi \gc x) = \sig(s) \emi (\phi \gc \state{s}(x))
                                                           & \ax{SO4} \\
\state{s}(\phi \emi x) = \phi \emi \state{s}(x)            & \ax{SO5} 
\end{axcol}
\end{eqntbl}
\end{table}
In this table, $a$ stands for an arbitrary constant from 
$\Act \union \set{\dead}$ and $\phi$ stands for an arbitrary closed term 
of sort $\Prop$.
SO1--SO5 have been taken from~\cite{BB94b}.

The following equations are among the equations derivable from the axioms of 
\ctACPps{+}SO:
\begin{ldispl}
\state{s}(\nex) = \nex\;, \qquad \state{s}(\dead) = \sig(s) \emi \dead\;.
\end{ldispl}%

All closed \ctACPps{+}SO terms of sort $\Proc$ can be reduced to a basic 
term.
\begin{proposition}[Elimination]
\label{proposition-elim-ctACPps+SO}
For all \ctACPps\textup{{+}SO} closed terms $p$ of sort $\Proc$, there 
exists a $q \in \cB$ such that $p = q$ is derivable from the axioms of 
\ctACPps\textup{{+}SO}.
\end{proposition}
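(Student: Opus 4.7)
The proof goes by induction on the structure of the closed term $p$, exactly in the pattern of the proofs of Proposition~\ref{proposition-elim-ctBPAps} and Proposition~\ref{proposition-elim-ctACPps}. For the cases where the outermost construct of $p$ is one of the constants or operators already in \ctACPps, one applies the induction hypothesis and quotes Proposition~\ref{proposition-elim-ctACPps}: the reductions used there are derivable from the axioms of \ctACPps, hence a fortiori from those of \ctACPps\textup{{+}SO}. The only genuinely new case is $p \equiv \state{s}(p')$; here the induction hypothesis gives a $q \in \cB$ with $p' = q$, so it suffices to establish the auxiliary claim that, for every $s \in S$ and every $q \in \cB$, there exists $q' \in \cB$ with $\state{s}(q) = q'$ derivable from the axioms of \ctACPps\textup{{+}SO}.

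This auxiliary claim is proved by a second induction, on the structure of the basic term $q$. For $q \equiv \nex$ and for $q \equiv \phi \emi \dead$ one reduces directly using SO5, the derivable equation $\state{s}(\dead) = \sig(s) \emi \dead$, and SE6; should the accumulated outer signal become equivalent to $\False$, axiom IMP together with SE2 collapses the result to $\nex$. The case $q \equiv q_1 \altc q_2$ is immediate from SO3, the induction hypothesis, and closure of $\cB$ under $\altc$. For $q \equiv \phi \gc a \seqc p$ one combines SO2 with GC5 and applies the induction hypothesis to $\state{\eff(a,s)}(p)$ to obtain a basic term, then reshapes the result using the same manoeuvres as in the guarded-command case treated below.

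The main obstacle is therefore the guarded-command case $q \equiv \phi \gc a$ (and the analogous reshape step for $q \equiv \phi \gc a \seqc p$). Applying SO4 and SO1 yields $\sig(s) \emi (\phi \gc (\sig(s) \emi \act(a,s)))$; one then uses SE8 to float the inner emission outward via the scheme $\phi \gc (\psi \emi y) = (\phi \IImpl \psi) \emi (\phi \gc y)$, and SE6 to merge the two stacked emissions into a single outer one of signal $\chi \LEqv \sig(s) \CAnd (\phi \IImpl \sig(s))$. If $\act(a,s) = \dead$, axiom GC3 turns the body into $\dead$ and one obtains a term of the form $\chi \emi \dead$, which is in $\cB$ when $\chi \notin [\False]$ and equals $\nex$ otherwise (by IMP and SE2). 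If $\act(a,s) \in \Act$, the term has the shape $\chi \emi (\phi \gc b)$, which is not itself a basic term; the derivable identity $\chi \emi \dead \altc x = \chi \emi x$, used from right to left, rewrites it as $\chi \emi \dead \altc (\phi \gc b)$, a legal basic term with root signal $\chi$ and single guarded-action summand $\phi \gc b$ (again collapsing to $\nex$ when $\chi \in [\False]$). All of these manipulations are routine but must be performed with care about the side conditions $\phi \notin [\False]$ that the definition of $\cB$ imposes on each subterm; no axiom beyond those of \ctACPps\textup{{+}SO} is needed.
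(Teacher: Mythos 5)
Your proof is correct and follows essentially the same route as the paper, which merely states that the argument ``goes along the same line as the proof of Proposition~\ref{proposition-elim-ctBPAps}'': a structural induction whose only new case, $\state{s}(p')$, is discharged by an auxiliary claim that $\state{s}(q)$ reduces to a basic term for every $q \in \cB$, proved by induction on basic terms. Your write-up simply makes explicit the details (SO1--SO5, SE6, SE8, the $\chi \emi \dead \altc x = \chi \emi x$ reshaping, and the $[\False]$ side conditions) that the paper leaves implicit.
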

\begin{proof}
The proof goes along the same line as the proof of 
Proposition~\ref{proposition-elim-ctBPAps}.
\qed
\end{proof}

The additional transition rules for the state operators are given in 
Table~\ref{sos-SO}.
\begin{table}[!tb]
\caption{Transition rules for state operators}
\label{sos-SO}
\begin{ruletbl}
\RuleC
{\aterm{x}{\gact{\phi}{a}},\; \rsgn{\state{s}(x)} = \psi}
{\aterm{\state{s}(x)}{\gact{\phi}{\act(a,s)}}}
{\act(a,s) \neq \dead,\; \psi \notin [\False]}
\\
\RuleC
{\astep{x}{\gact{\phi}{a}}{x'},\;
 \rsgn{\state{s}(x)} = \psi,\; \rsgn{\state{\eff(a,s)}(x')} = \chi}
{\astep{\state{s}(x)}{\gact{\phi}{\act(a,s)}}{\state{\eff(a,s)}(x')}}
{\act(a,s) \neq \dead,\; \psi, \chi \notin [\False]}
\\
\RuleC
{\rsgn{x} = \phi}
{\rsgn{\state{s}(x)} = \phi \CAnd \psi}
{\sig(s) = \psi}
\end{ruletbl}
\end{table}
In this table, $a$ stands for an arbitrary constant from 
$\Act \union \set{\dead}$ and $\phi$ stands for an arbitrary closed term 
of sort $\Prop$.

Bisimulation equivalence is a congruence with respect to  the operators 
of \ctACPps{+}SO.
\begin{proposition}[Congruence]
\label{proposition-congr-ctACPps+SO}
For all closed \ctACPps\textup{{+}SO} terms $p,q,p',q'$ of sort $\Proc$ 
and closed \ctACPps\textup{{+}SO} terms $\phi$ of sort $\Prop$, 
$p \bisim q$ and $p' \bisim q'$ imply $p \altc p' \bisim q \altc q'$, 
$p \seqc p' \bisim q \seqc q'$, $\phi \gc p \bisim \phi \gc q$, 
$\phi \emi p \bisim \phi \emi q$, $p \parc p' \bisim q \parc q'$, 
$p \leftm p' \bisim q \leftm q'$, $p \commm p' \bisim q \commm q'$,  
$\encap{H}(p) \bisim \encap{H}(q)$, and 
$\state{s}(p) \bisim \state{s}(q)$.
\end{proposition}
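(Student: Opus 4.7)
The plan is to proceed along exactly the same lines as the proofs of Proposition~\ref{proposition-congr-ctBPAps} and Proposition~\ref{proposition-congr-ctACPps}. First I would reformulate the transition rules for \ctACPps\textup{{+}SO}, including the three new rules for the state operators in Table~\ref{sos-SO}, by replacing each transition $\astep{p}{\gact{\phi}{a}}{p'}$ by a family of transitions $\astep{p}{\gact{\nu}{a}}{p'}$ indexed by valuations $\nu$ with $\nu(\phi) \neq \false$, and likewise for $\aterm{p}{\gact{\phi}{a}}$ and $\rsgn{p} = \phi$. As explained in the proof of Proposition~\ref{proposition-congr-ctBPAps}, splitting bisimulation based on the original rules coincides with ordinary bisimulation equivalence based on these reformulated rules.

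Next I would check that the reformulated rules constitute a complete transition system specification in panth format. Since the operators inherited from \ctACPps\ were already handled in the proof of Proposition~\ref{proposition-congr-ctACPps}, it suffices to verify the format for the three additional rules for $\state{s}$. Each of these has variables as the sources of its premises and a single operator application $\state{s}(\cdot)$ as the source of its conclusion, with lookahead on the signal of $x$ and (in the step case) on the signal of $\state{\eff(a,s)}(x')$; the only side conditions are $\act(a,s) \neq \dead$ and $\psi, \chi \notin [\False]$, which are of the same kind as the side conditions already dealt with for $\parc$, $\leftm$, $\commm$, and $\encap{H}$ in the previous congruence proof.

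The conclusion then follows from the standard meta-result that, for a complete transition system specification in panth format, ordinary bisimulation equivalence is a congruence with respect to all operators of the signature (see e.g.~\cite{FG96a}). Via the coincidence established in the first step, this yields congruence of $\bisim$ with respect to $\state{s}$; the congruence with respect to the other operators is inherited from Proposition~\ref{proposition-congr-ctACPps}.

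The only point that needs a little care, rather than being a genuine obstacle, is the bookkeeping for the side conditions involving the externally given functions $\act$ and $\eff$. Since $\act$ and $\eff$ only take values in $\Actd$ and $S$ respectively and are used as lookups on the fixed parameters $a$ and $s$ rather than as syntactic constructors, the rules for $\state{s}$ can be understood as a schematic family indexed by $a \in \Act$ and $s \in S$; after this unfolding, the side conditions become static constraints on the label of the conclusion, which fits the panth format without difficulty. Once this is noted, the verification is routine.
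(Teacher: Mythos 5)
Your proposal is correct and follows essentially the same route as the paper: the paper's proof of this proposition simply says it goes along the same lines as Proposition~\ref{proposition-congr-ctBPAps}, i.e.\ reformulate the transition rules so that splitting bisimulation coincides with ordinary bisimulation, check that the result is a complete transition system specification in panth format, and invoke the standard congruence meta-theorem. Your additional remarks on the state-operator rules and on treating $\act$ and $\eff$ as index parameters are exactly the bookkeeping the paper leaves implicit.
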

\begin{proof}
The proof goes along the same line as the proof of 
Proposition~\ref{proposition-congr-ctBPAps}.
\qed
\end{proof}

\ctACPps\textup{{+}SO} is sound with respect to $\bisim$ for equations 
between closed terms.
\begin{theorem}[Soundness]
\label{theorem-soundness-ctACPps+SO}
For all closed \ctACPps\textup{{+}SO} terms $p,q$ of sort $\Proc$, 
\mbox{$p = q$} is derivable from the axioms of \ctACPps\textup{{+}SO} 
only if $p \bisim q$.
\end{theorem}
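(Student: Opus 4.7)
The plan is to follow exactly the same template used in the proofs of Theorem~\ref{theorem-soundness-ctBPAps} and Theorem~\ref{theorem-soundness-ctACPps}. By Proposition~\ref{proposition-congr-ctACPps+SO}, bisimulation equivalence is a congruence with respect to every operator of \ctACPps{+}SO, so it suffices to show that, for every axiom of \ctACPps{+}SO, every closed substitution instance $p = q$ satisfies $p \bisim q$. For the axioms inherited from \ctACPps, the witnessing bisimulations already constructed in the proof of Theorem~\ref{theorem-soundness-ctACPps} still work, because the additional transition rules for $\state{s}$ do not introduce any new transitions or signals for the constants and operators already present. So the real work is only for the five new axioms SO1--SO5.

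For each of SO1--SO5, I would construct a witnessing relation $R$ consisting of all closed substitution instances of the axiom in question, together with the identity $x = x$ (and, where necessary, already-validated equations such as instances of SE6 needed to identify signal components). Concretely: for SO3 ($\state{s}(x \altc y) = \state{s}(x) \altc \state{s}(y)$) and SO5 ($\state{s}(\phi \emi x) = \phi \emi \state{s}(x)$), a direct inspection of the transition rules for $\state{s}$ against those of $\altc$ and $\emi$ suffices, using that the root signal of both sides is $\sig(s) \CAnd \rsgn{x}$ up to $\LEqv$ (invoking commutativity/associativity of $\CAnd$ from property~(g) of Section~\ref{sect-LP-iimpl-false}). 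For SO4, the same analysis applies, where the signal $\sig(s) \CAnd (\phi \IImpl \rsgn{x})$ on both sides is matched via the laws in Table~\ref{laws-lequiv}.

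The main obstacles, and where I would spend care, are SO1 and especially SO2. For SO1 the witnessing relation contains the pair $(\state{s}(a), \sig(s) \emi \act(a,s))$; on the left, a transition $\aterm{\state{s}(a)}{\gact{\phi}{\act(a,s)}}$ arises from $\aterm{a}{\gact{\True}{a}}$ only when $\act(a,s) \neq \dead$ and the root signal $\sig(s)$ is not in $[\False]$, which matches exactly the conditions under which the right-hand side can perform $\act(a,s)$; the case $\act(a,s) = \dead$ or $\sig(s) \in [\False]$ must be checked separately, since then neither side exhibits any transition. For SO2 the witnessing relation must include, for each summand, the pair $(\state{\eff(a,s)}(x), \state{\eff(a,s)}(x))$ in order to carry the residual into the identity part of $R$; the congruence closure then handles subsequent steps. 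Throughout, one has to observe the side conditions of the form ``$\chi \notin [\False]$'' on both sides, and verify that the root-signal equations match up modulo $\LEqv$ by using the laws of Table~\ref{laws-lequiv} (in particular conjunction associativity, commutativity, idempotence, and the interaction with $\True$).

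The hardest part is thus the bookkeeping around signal propagation: ensuring that whenever a transition fires on one side, the side condition $\rsgn{\cdot} \notin [\False]$ is equivalent on the other side, and that the new root signal produced matches modulo $\LEqv$. This is routine but not automatic, because $\state{s}$ injects $\sig(s)$ as an extra conjunct into every reachable root signal, and this conjunct has to be tracked along every residual and reconciled with the signal generated by the unfolding of the axiom on the right-hand side. Once this is handled for SO1 and SO2, the remaining cases SO3--SO5 follow by the same type of inspection as in the proof of Theorem~\ref{theorem-soundness-ctACPps}.
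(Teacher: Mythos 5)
Your proposal is correct and follows essentially the same route as the paper, whose entire proof of this theorem is the remark that it ``goes along the same line as the proof of Theorem~\ref{theorem-soundness-ctACPps}'' --- i.e., reduce to closed substitution instances of the axioms via Proposition~\ref{proposition-congr-ctACPps+SO} and exhibit, per axiom, a witnessing splitting bisimulation consisting of the instances of that axiom plus (where needed) the identity, checked using the laws of Table~\ref{laws-lequiv}. Your additional bookkeeping for SO1--SO5 is exactly the unstated content of that remark.
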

\begin{proof}
The proof goes along the same line as the proof of 
Theorem~\ref{theorem-soundness-ctACPps}.
\qed
\end{proof}

\ctACPps\textup{{+}SO} is complete with respect to $\bisim$ for 
equations between closed terms.
\begin{theorem}[Completeness]
\label{theorem-completeness-ctACPps+SO}
For all closed \ctACPps\textup{{+}SO} terms $p,q$ of sort $\Proc$, 
$p = q$ is derivable from the axioms of \ctACPps\textup{{+}SO} if 
$p \bisim q$.
\end{theorem}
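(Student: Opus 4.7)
The plan is to mimic exactly the strategy used in the proof of Theorem~\ref{theorem-completeness-ctACPps}, but with \ctACPps\ playing the role previously played by \ctBPAps, and \ctACPps\textup{{+}SO}\ playing the role previously played by \ctACPps. That earlier proof invokes Theorem~3.14 from~\cite{Ver94a}, whose hypotheses are: (i)~completeness of the subtheory with respect to $\bisim$; (ii)~soundness of the extended theory with respect to $\bisim$; (iii)~elimination of the additional operators, i.e.\ every closed term of the extended theory can be proved equal to a closed term of the subtheory; and (iv)~the transition rules of the extended theory form an operational conservative extension of those of the subtheory.

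First I would discharge hypotheses~(i)--(iii) by simply citing Theorem~\ref{theorem-completeness-ctACPps}, Theorem~\ref{theorem-soundness-ctACPps+SO}, and Proposition~\ref{proposition-elim-ctACPps+SO}, respectively. This reduces the proof to establishing~(iv): the set of transition rules for \ctACPps\textup{{+}SO}\ is an operational conservative extension of the set of transition rules for \ctACPps. Following the pattern of the earlier proof, I would first reformulate the transition rules for \ctACPps\textup{{+}SO}\ in the valuation-indexed style described immediately after the proof of Proposition~\ref{proposition-congr-ctBPAps} (and reused in Proposition~\ref{proposition-congr-ctACPps+SO}), so that each side condition of the form ``$\psi\notin[\False]$'' is absorbed into a per-valuation premise. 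After this reformulation, the rules form a complete transition system specification, and the standard syntactic criterion of~\cite{FV98a} applies.

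The verification then splits into two routine checks: the reformulated rules for \ctACPps\ are source-dependent (this is already established in the proof of Theorem~\ref{theorem-completeness-ctACPps} and need only be restated), and the three additional rule schemas for $\state{s}$ all have fresh sources, namely terms of the form $\state{s}(x)$, whose head symbol $\state{s}$ does not appear in the signature of \ctACPps. Because the additional rules do not introduce transitions for, or signal-emission judgements on, any term built only from \ctACPps\ operators, the conservativity criterion is met, and Theorem~3.14 of~\cite{Ver94a} delivers the result.

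The only place any real thought is required is in the conservativity step; and there the main obstacle is cosmetic rather than substantive, since one must be careful to notice that although $\state{s}$ is a unary operator whose sole transition rules have $\state{s}(x)$ as source, the presence of the premise $\rsgn{\state{s}(x)}=\psi$ in these rules references the new signal-emission rule for $\state{s}$, so one needs this new signal-emission rule to be counted as part of the additional rules (which it is) rather than as a modification of an existing rule. Once this is observed, source-dependence and freshness of sources are immediate, and the proof is complete.
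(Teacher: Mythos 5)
Your proposal is correct and follows essentially the same route as the paper, whose own proof of this theorem is simply the remark that it goes along the same lines as the proof of Theorem~\ref{theorem-completeness-ctACPps}; you have merely spelled out the instantiation of Theorem~3.14 of~\cite{Ver94a} with \ctACPps\ as the subtheory and \ctACPps\textup{{+}SO} as the extension, citing the appropriate completeness, soundness, and elimination results and checking operational conservativity for the state-operator rules. No gaps.
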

\begin{proof}
The proof goes along the same line as the proof of 
Theorem~\ref{theorem-completeness-ctACPps}.
\qed
\end{proof}

\section{Guarded Recursion}
\label{sect-ctACPps+REC}

In order to allow for the description of processes without a finite 
upper bound to the number of actions that it can perform, we add in this 
section guarded recursion to \ctACPps\ and \ctACPps\textup{{+}SO}.
The resulting theories are called \ctACPps\textup{{+}REC} and 
\ctACPps\textup{{+}SO{+}REC}, respectively.

A \emph{recursive specification} over \ctACPps\ is a set of 
\emph{recursion} equations $E = \set{X = t_X \where X \in V}$ where $V$ 
is a set of variables of sort $\Proc$ and each $t_X$ is a term of sort 
$\Proc$ that only contains variables from $V$.
We write $\vars(E)$ for the set of all variables that occur on the
left-hand side of an equation in $E$.
A \emph{solution} of a recursive specification $E$ is a set of processes
(in some model of \ctACPps) $\set{P_X \where X \in \vars(E)}$ such that 
the equations of $E$ hold if, for all $X \in \vars(E)$, $X$ stands for 
$P_X$.

Let $t$ be a \ctACPps\ term of sort $\Proc$ containing a variable $X$.
We call an occurrence of $X$ in $t$ \emph{guarded} if $t$ has a subterm
of the form $a \seqc t'$, where $a \in \Act$, with $t'$ containing this 
occurrence of $X$.
A recursive specification $E$ over \ctACPps\ is called a \emph{guarded} 
recursive specification if all occurrences of variables in the 
right-hand sides of its equations are guarded or it can be rewritten to 
such a recursive specification using the axioms of \ctACPps\ in either 
direction and/or the equations in $E$ from left to right.
We are only interested in a model of \ctACPps\ in which guarded 
recursive specifications have unique solutions.

For each guarded recursive specification $E$ over \ctACPps\ and each 
variable $X \in \vars(E)$, we add a constant of sort $\Proc$, standing 
for the unique solution of $E$ for $X$, to the constants of \ctACPps.
This constant is denoted by $\rec{X}{E}$.

We will use the following notation.
Let $t$ be a \ctACPps\ term of sort $\Proc$ and $E$ be a guarded 
recursive specification over \ctACPps.
Then we write $\rec{t}{E}$ for $t$ with, for all $X \in \vars(E)$, all
occurrences of $X$ in $t$ replaced by $\rec{X}{E}$.

The additional axioms for guarded recursion are the equations given
in Table~\ref{axioms-REC}.%
\begin{table}[!tb]
\caption{Axioms for guarded recursion}
\label{axioms-REC}
\begin{eqntbl}
\begin{saxcol}
\rec{X}{E} = \rec{t_X}{E} & \mif X = t_X \in E  & \ax{RDP} \\
E \Limpl X  = \rec{X}{E}   & \mif X \in \vars(E) & \ax{RSP}
\end{saxcol}
\end{eqntbl}
\end{table}
In this table, $X$, $t_X$, and $E$ stand for an arbitrary variable
of sort $\Proc$, an arbitrary \ctACPps\ term, and an arbitrary guarded 
recursive specification over \ctACPps, respectively. 
Side conditions are added to restrict the variables, terms and guarded 
recursive specifications for which $X$, $t_X$ and $E$ stand.
The additional axioms for guarded recursion are known as the recursive
definition principle (RDP) and the recursive specification principle
(RSP).
The equations $\rec{X}{E} = \rec{t_X}{E}$ for a fixed $E$ express that
the constants $\rec{X}{E}$ make up a solution of $E$.
The conditional equations $E \Limpl X = \rec{X}{E}$ express that this
solution is the only one.

The additional transition rules for the constants $\rec{X}{E}$ are given 
in Table~\ref{sos-REC}.
\begin{table}[!tb]
\caption{Transition rules for guarded recursion}
\label{sos-REC}
\begin{ruletbl}
\RuleC
{\aterm{\rec{t_X}{E}}{\gact{\phi}{a}}}
{\aterm{\rec{X}{E}}{\gact{\phi}{a}}}
{X \!=\! t_X \,\in\, E}
\qquad
\RuleC
{\astep{\rec{t_X}{E}}{\gact{\phi}{a}}{x'}}
{\astep{\rec{X}{E}}{\gact{\phi}{a}}{x'}}
{X \!=\! t_X \,\in\, E}
\\
\RuleC
{\rsgn{\rec{t_X}{E}} = \phi}
{\rsgn{\rec{X}{E}} = \phi}
{X \!=\! t_X \,\in\, E}
\end{ruletbl}
\end{table}
In this table, $X$, $t_X$ and $E$ stand for an arbitrary variable
of sort $\Proc$, an arbitrary \ctACPps\ term and an arbitrary guarded 
recursive specification over \ctACPps, respectively. 

Bisimulation equivalence is a congruence with respect to  the operators 
of \ctACPps{+}REC.
\begin{proposition}[Congruence]
\label{proposition-congr-ctACPps+REC}
For all closed \ctACPps\textup{{+}REC} terms $p,q,p',q'$ of sort $\Proc$ 
and closed \ctACPps\textup{{+}REC} terms $\phi$ of sort $\Prop$, 
$p \bisim q$ and $p' \bisim q'$ imply $p \altc p' \bisim q \altc q'$, 
$p \seqc p' \bisim q \seqc q'$, $\phi \gc p \bisim \phi \gc q$, 
$\phi \emi p \bisim \phi \emi q$, $p \parc p' \bisim q \parc q'$, 
$p \leftm p' \bisim q \leftm q'$, $p \commm p' \bisim q \commm q'$,  
$\encap{H}(p) \bisim \encap{H}(q)$.
\end{proposition}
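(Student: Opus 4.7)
The plan is to follow the same recipe as the proofs of Propositions~\ref{proposition-congr-ctBPAps}, \ref{proposition-congr-ctACPps}, and~\ref{proposition-congr-ctACPps+SO}. First I would reformulate the transition rules for \ctACPps{+}REC in the way described after Proposition~\ref{proposition-congr-ctBPAps}: every conditional transition \smash{$\astep{p}{\gact{\phi}{a}}{p'}$} is replaced by one transition \smash{$\astep{p}{\gact{\nu}{a}}{p'}$} for each valuation $\nu$ with $\nu(\phi) \neq \false$, and analogously for the action-termination and signal-emission relations. Two properties must then be verified: (i)~splitting bisimulation based on the original transition rules coincides with ordinary bisimulation based on the reformulated rules, and (ii)~the reformulated transition system specification is complete and in panth format. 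Given (i) and (ii), the congruence follows from the standard result that ordinary bisimulation equivalence is a congruence when the transition system specification is complete and in panth format (see e.g.~\cite{FG96a}).

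The only genuinely new ingredient compared with the proof of Proposition~\ref{proposition-congr-ctACPps+SO} concerns the transition rules for the recursion constants $\rec{X}{E}$ in Table~\ref{sos-REC}. Each such rule has a single constant $\rec{X}{E}$ as the source of its conclusion, and its premise refers to a transition of $\rec{t_X}{E}$, a fixed closed term determined by $E$ and the chosen equation $X = t_X \in E$. The recursion constants occurring in $\rec{t_X}{E}$ behave as closed terms, so the rules fit the panth format without difficulty. Moreover, the reformulation described above only refines the labels on the transitions; it does not affect the shape of sources or the structure of the premises of any rule, so the reformulated rules for the recursion constants remain in panth format as well.

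The main obstacle, as in the earlier congruence propositions, will be the routine but somewhat tedious bookkeeping required to establish (i), i.e.\ to check that a splitting bisimulation in the sense of Section~\ref{sect-sem-ctBPAps} on the original transition relations is precisely the same thing as an ordinary bisimulation on the reformulated transition relations. For this the laws from property~(g) of LP$^{\IImpl,\False}$ play the same role as in the soundness proofs. Once this is settled, completeness of the reformulated transition system specification and its being in panth format are routine, and the congruence is immediate. \qed
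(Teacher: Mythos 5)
Your proposal is correct and follows essentially the same route as the paper: the paper's proof simply states that it goes along the same lines as Proposition~\ref{proposition-congr-ctBPAps}, i.e.\ reformulate the transition rules so that splitting bisimulation coincides with ordinary bisimulation, check that the result is a complete transition system specification in panth format, and invoke the standard congruence result from~\cite{FG96a}. Your additional remarks on why the rules for the constants $\rec{X}{E}$ fit the panth format are a welcome elaboration of a step the paper leaves implicit.
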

\begin{proof}
The proof goes along the same line as the proof of 
Proposition~\ref{proposition-congr-ctBPAps}.
\qed
\end{proof}

\ctACPps\textup{{+}REC} is sound with respect to $\bisim$ for equations 
between closed terms.
\begin{theorem}[Soundness]
\label{theorem-soundness-ctACPps+REC}
For all closed \ctACPps\textup{{+}REC} terms $p,q$ of sort $\Proc$, 
\mbox{$p = q$} is derivable from the axioms of \ctACPps\textup{{+}REC} 
only if $p \bisim q$.
\end{theorem}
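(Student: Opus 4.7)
The plan is to mirror the proof of Theorem~\ref{theorem-soundness-ctACPps}. By Proposition~\ref{proposition-congr-ctACPps+REC}, it suffices to show soundness for every closed substitution instance of every axiom of \ctACPps\textup{{+}REC}. For the axioms inherited from \ctACPps\ the witnessing bisimulations built in the proof of Theorem~\ref{theorem-soundness-ctACPps} still apply verbatim, so the new content lies entirely in the two axioms for guarded recursion, RDP and RSP.

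For RDP, given an equation $X = t_X$ in a guarded recursive specification $E$, I would take as witnessing relation the set of all closed substitution instances of RDP, extended with all closed instances of the equation $x = x$. The transition rules in Table~\ref{sos-REC} translate each action step, each action termination, and the root signal of $\rec{X}{E}$ directly into the corresponding transition of $\rec{t_X}{E}$ and vice versa, so that checking the bisimulation conditions is routine, using the laws of LP$^{\IImpl,\False}$ from property~(g) to compare root signals up to $\LEqv$.

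For RSP, I would reduce the problem to the claim that guarded recursive specifications have unique solutions up to $\bisim$. Given two closed substitutions $\{p_X\}_{X \in \vars(E)}$ and $\{q_X\}_{X \in \vars(E)}$ that both satisfy the equations of $E$, the witnessing relation is built by collecting all pairs $(\sigma_p(t),\sigma_q(t))$, where $t$ ranges over \ctACPps\ terms whose free variables lie in $\vars(E)$ and $\sigma_p$, $\sigma_q$ denote the substitutions delivering the two solutions. Guardedness is used to show that every action step of $\sigma_p(t)$ exposes a residual that is again of the form $\sigma_p(t')$ for some \ctACPps\ term $t'$, matched on the $q$-side by $\sigma_q(t')$; the signal clause is handled directly since $\sigma_p(t)$ and $\sigma_q(t)$ emit root signals that coincide up to $\LEqv$. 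Specializing this to pairs $(\rec{X}{E},p_X)$, where $p_X$ is any closed solution of $E$ for $X$, yields the bisimulation witnessing RSP.

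The main obstacle will be RSP. The difficulty is that the definition of guarded recursive specification allows guardedness to become apparent only after rewriting with the axioms of \ctACPps\ or unfolding equations of $E$, so the residual $t'$ above need not be syntactically available without first doing such a rewrite. To cope with this, I would first close the candidate relation under provable equality in \ctACPps, invoking the already-established soundness of those axioms, and then apply the guardedness argument to variants of the right-hand sides in which every variable occurrence is syntactically shielded by a leading action.
\qed
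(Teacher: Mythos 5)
Your overall strategy coincides with the paper's: reduce to closed substitution instances of the axioms via Proposition~\ref{proposition-congr-ctACPps+REC}, reuse the witnessing relations from Theorem~\ref{theorem-soundness-ctACPps} for the inherited axioms, dispose of RDP with the standard ``instances of the axiom plus $x = x$'' relation, and isolate RSP as the one genuinely hard case. Where you diverge is in how RSP is handled. You argue directly on closed terms: build a candidate relation from pairs $(\sigma_p(t),\sigma_q(t))$ for two solutions of $E$, use guardedness to show residuals stay inside the relation, and repair the fact that guardedness and residuals may only appear after rewriting by closing the relation under provable equality. The paper instead routes the argument through a model construction: it builds a transition-system model of \ctACPps\textup{{+}REC} in the style of~\cite{BM05a}, observes that validity in that model coincides with $\bisim$ on closed terms, and then imports the proof of RSP from Theorem~10 of~\cite{BM05a} --- while explicitly crediting the underlying ideas to~\cite{BBK87b}, which is precisely the unique-solutions-via-bisimulation argument you sketch. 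So the two proofs share their mathematical core; yours is the more elementary and self-contained packaging, the paper's buys reusability of an existing model-theoretic result at the cost of an external dependency. One caveat on your version: ``closing the candidate relation under provable equality'' is an up-to technique, and for the splitting bisimulations used here (where a step may be matched by a step with a different condition agreeing only on the relevant valuations, and signals are compared up to $\LEqv$) the soundness of bisimulation-up-to is not automatic and would need to be verified; the paper sidesteps this by working in the quotient model, where solutions are literally equal rather than merely bisimilar. Since the paper itself gives only a ``very brief outline'' of this step, your sketch is at a comparable level of rigour and I see no gap that would make it fail.
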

\begin{proof}
Because of Proposition~\ref{proposition-congr-ctACPps+REC}, it is 
sufficient to prove the theorem for all closed \ctACPps{+}REC terms $p$ 
and $q$ for which $p = q$ is a closed substitution instance of an axiom 
of \ctACPps{+}REC.
With the exception of the closed substitution instances of RSP, the 
proof goes along the same line as the proof of 
Theorem~\ref{theorem-soundness-ctACPps}.
The proof of the validity of RSP is rather involved. 
We confine ourselves to a very brief outline of the proof.
The transition rules for \ctACPps{+}REC determines a transition system 
for each process that can be denoted by a closed \ctACPps{+}REC term of 
sort $\Proc$.
A model of \ctACPps{+}REC based on these transition systems can be 
constructed along the same line as the models of a generalization of 
ACPps constructed in~\cite{BM05a}.
An equation $p = q$ between closed \ctACPps{+}REC terms holds in this 
model iff $p \bisim q$.
Based on this model, the validity of RSP can be proved along the same 
line as in the proof of Theorem~10 from~\cite{BM05a}.
The underlying ideas of that proof originate largely from~\cite{BBK87b}.
\qed
\end{proof}

Guarded recursion can be added to \ctACPps{+}SO in the same way as it is
added to \ctACPps\ above, resulting in \ctACPps{+}SO{+}REC.
It is easy to see that the above results, i.e.\ 
Proposition~\ref{proposition-congr-ctACPps+REC} 
and Theorem~\ref{theorem-soundness-ctACPps+REC},
go through for \ctACPps{+}SO{+}REC.

Completeness of \ctACPps\textup{{+}REC} and \ctACPps\textup{{+}SO{+}REC} 
with respect to $\bisim$ for equations between closed terms can be 
obtained by restriction to the finite linear recursive specifications,
i.e.\ the guarded recursive specifications with finitely many recursion 
equations where the right-hand side of each recursion equation can be 
written in the form
$\chi \emi \dead \altc
 \vAltc{i \in \set{1,\ldots,n}} \phi_i \gc a_i \seqc X_i \altc
 \vAltc{j \in \set{1,\ldots,m}} \psi_j \gc b_j$,
where $n,m \in \Nat$, where $\chi \notin [\False]$,
where $\phi_i \notin [\False]$, $a_i \in \Act$, and $X_i$ is variable of
sort $\Proc$ for all $i \in \set{1,\ldots,n}$, and
where $\psi_j \notin [\False]$ and $b_j \in \Act$ for all 
$j \in \set{1,\ldots,m}$.

\section{Concluding Remarks}
\label{sect-concl}

We have presented \ctACPps, a version of ACPps built on a paraconsistent 
pro\-positional logic called LP$^{\IImpl,\False}$.
\ctACPps\ deals with processes with possibly self-contradictory states 
by means of this paraconsistent logic.
To our knowledge, processes with possibly self-contradictory states have 
not been dealt with in any theory or model of processes.
This leaves nothing to be said about related work.
However, it is worth mentioning that the need for a theory or model of 
processes with possibly self-contradictory states was already expressed 
in~\cite{Hew08a}.

In order to streamline the presentation of \ctACPps, we have left out 
the terminal signal emission operator, the global signal emission 
operator, and the root signal operator of ACPps and also the additional 
operators introduced in~\cite{BB94b} other than the state operators. 
To our knowledge, these are exactly the operators that have not been 
used in any work based on ACPps.
The root signal operator is an auxiliary operator which can be dispensed
with and the global signal emission operator is an auxiliary operator 
which can be dispensed with in the absence of the terminal signal 
emission operator.
The terminal signal emission operator makes it possible to express that
a proposition holds at the termination of a process.

\ctACPps\ is a contradiction-tolerant version of ACPps~\cite{BB94b}.
ACPps itself can be viewed as a simplification and specialization of 
ACPS~\cite{BB92c}.
The simplification consists of the use of conditions instead of special 
actions to observe signals.
The specialization consists of the use of the set of all propositions 
with propositional variables from a given set instead of an arbitrary
free Boolean algebra over a given set of generators.
Later, the generalization of ACPps to arbitrary such Boolean algebras 
has been treated in~\cite{BM05a}.
Moreover, a timed version of ACPps has been used in~\cite{BM03a} as the 
basis of a process algebra for hybrid systems and
a timed version of ACPps has been used in~\cite{BMU98a} to give a 
semantics to a specification language that was widely used in 
telecommunications at the time.

Timed versions of \ctACPps\ may be useful in various applications.
We believe that they can be obtained by combining \ctACPps\ with a timed 
version of ACP, such as ACP$^\mathrm{drt}$ or ACP$^\mathrm{srt}$ 
from~\cite{BM02a}, in much the same way as timed versions of ACPps have 
been obtained in~\cite{BM03a,BMU98a}.
Because idling of processes is taken into account, two forms of the 
guarded command operator can be distinguished in these timed versions,
namely a non-waiting form and a waiting form (see e.g.~\cite{BMU98a}).
A version of \ctACPps\ with abstraction features like in \ACP$^\tau$ 
(see e.g.~\cite{BW90}) may be useful in various applications as well.
Working out a timed version of \ctACPps\ and working out a version of 
\ctACPps\ with abstraction features are options for further work.
It is very important that case studies are carried out in conjunction
with the theoretical work just mentioned to assess the degree of 
usefulness in practical applications.

LP$^{\IImpl,\False}$ is Blok-Pigozzi algebraizable.
However, although there must exist one, a conditional-equa\-tional 
axiomatization of the algebras concerned has not yet been devised.
Owing to this, the equations derivable in \ctACPps\ cannot always be 
derived by equational reasoning only. 
Another option for further work is devising the axiomatization 
referred to.

\subsection*{Acknowledgements}
We thank two anonymous referees for carefully reading a preliminary 
version of this paper and for suggesting improvements of the 
presentation of the paper.

\bibliographystyle{splncs03}
\bibliography{PA,PCL}

\end{document}